\def\({\left(}
\def\){\right)}
\newcommand{\bi}{\begin{itemize}}
\newcommand{\ei}{\end{itemize}}
\def\ba#1\ea{\begin{align}#1\end{align}}
\def\l{\left\{ }
\def\r{\right\} }
\def\L{\left[ }
\def\R{\right] }
\providecommand{\abs}[1]{\lvert#1\rvert}
\newcommand{\vs}[1]{\vspace{#1 cm}}
\def\Del{\Delta}
\def\sig{\sigma}
\titleformat{\section}{\large\bfseries}{\thesection.}{.5em}{}
\titlespacing*{\section}{0pt}{*3}{*2}
\titleformat{\subsection}{\normalfont\bfseries}{\thesubsection.}{.5em}{}
\titlespacing*{\subsection} {0pt}{*3}{*2}
\titleformat{\subsubsection}{\normalfont\bfseries}{\thesubsubsection.}{.5em}{}
\titlespacing*{\subsubsection} {0pt}{*3}{*2}
\theoremstyle{plain} %% italic text
\newtheorem{theorem}{Theorem}[section]
\newtheorem{lemma}{Lemma}[section]
\theoremstyle{definition} %% or \theoremstyle{remark} will produce roman text
\numberwithin{equation}{section} %% double numbering within sections
\begin{document}

\title{\textbf{\Large Minimum Risk Point Estimation of Gini Index}}

\date{}

\maketitle

%%%%%%%%% Authors, affiliations %%%%%%%%%%%%%%%%%%%%%%%%%%
\vskip -1.4cm
\author{

\vskip -1cm
\begin{flushleft}
 \textbf{ \small SHYAMAL KRISHNA DE} \\
\end{flushleft}
\vs{-.3} 
\par \small School of Mathematical Sciences, National Institute of Science Education and Research, Bhubaneswar, Odisha, India\\ 
\indent (sde@niser.ac.in) 
\begin{flushleft}
\textbf{ \small BHARGAB CHATTOPADHYAY} \footnote{Corresponding author: Department of Mathematical Sciences, FO 2.402A, The University of Texas at Dallas, 800 West Campbell Road, Richardson, TX 75080, USA, E-mail: bhargab@utdallas.edu.}\\
\end{flushleft}
\vs{-.3}
\par \small Department of Mathematical Sciences, The University of Texas at Dallas, Richardson, Texas, USA \\ 
\indent (bhargab@utdallas.edu) \\

}

%\symbolfootnote[0]{\normalsize Address correspondence to Shyamal K. De, Department of Mathematical Sciences,
%Library North 2200, Binghamton University, 4400 Vestal Parkway East, Binghamton, New York 13902-6000, USA; Fax: 607-777-2450; E-mail: shyamalkd@gmail.com}

{\small \noindent\textbf{Abstract:} This paper develops a theory and methodology for estimation of Gini index such that both cost of sampling and estimation error are minimum. Methods in which sample size is fixed in advance, cannot minimize estimation error and sampling cost at the same time. In this article, a purely sequential procedure is proposed which provides an estimate of the sample size required to achieve a sufficiently smaller estimation error and lower sampling cost. Characteristics of the purely sequential procedure are examined and asymptotic optimality properties are proved without assuming any specific distribution of the data. Performance of our method is examined through extensive simulation study.}\\ \\
%%%%%%%%% Key words %%%%%%%%%%%%%%%%%%%%%%%%%%
{\small \noindent\textbf{Keywords:} Asymptotic Efficiency; Ratio Regret; Reverse Submartingale; Sequential Point Estimation; Simple Random Sampling.}

%%%%%%%%% Subject Classifications %%%%%%%%%
{\small \noindent\textbf{JEL Classification Code:} C400, C440}
%==================================================================
%==================================================================
\newpage

\section{INTRODUCTION } \label{s:Intro} 
Economic inequality exists in all societies or regions because of the existence of gap in income and wealth among individuals. In order to reduce the gap between the income levels of individuals, government of each and every country devise several economic policies. Periodic evaluation of the effect of economic policies in reducing the income gap between rich and poor is important. There are several inequality indexes in the economic literature. Allison (\citeyear{allison1978measures}) mentioned that among those indices, Gini inequality index is the most widely used measure because it satisfies four basic desirable criteria viz. (i) anonymity, (ii) scale independence, (iii) population independence, and (iv) Pigou-Dalton transfer principle and also Gini index has an easy interpretation and a relation to Lorenz curve.

The most celebrated Gini index, as given in Xu (\citeyear{xu2007}), is
\ba
G_{F}(X)=\frac{\Delta }{2\mu }, \,\, \text{ where }\,\, \Delta =E\left\vert X_{1}-X_{2}\right\vert, \, \mu =E(X)
\label{def:Gini}
\ea
and $X_{1}$ \& $X_{2}$ are two i.i.d. copies of non-negative random variable $X$. If there are $n$ randomly
selected individuals with incomes given by $X_{1},X_{2},\ldots,X_{n}$, then an estimator of \eqref{def:Gini} is given by
\ba
G_{n}=\frac{\widehat{\Delta }_{n}}{2\overline{X}_{n}},
\label{def:GiniEst}
\ea
where $\overline{X}_{n}$ is the sample mean and $\widehat{\Delta }_{n}$ is the Gini's mean difference (GMD) defined as
\ba
\widehat{\Delta }_{n}=\binom{n}{2}^{-1}\sum\limits_{1\leq i_{1}<i_{2}\leq
n}\left\vert X_{i_{1}}-X_{i_{2}}\right\vert .
\label{def:GMD}
\ea
For continuous evaluation of economic policies implemented by the government, periodic computation of Gini index for the whole country or a region is very important. One source from which Gini index of a region or a country can be calculated is using census data which is typically collected every 10 years. But for estimating the Gini index in intermediate years, data from annual household survey conducted by government agencies can be used. For instance, National Sample Survey (NSS) in India, European Statistics on Income and Living Conditions in European Union and other agencies conduct household surveys annually or biennially in respective regions or countries. However, many countries, for example Burundi, Chad, Mozambique (as per world bank website), can not afford or do not collect data from households on a relatively large scale atmost biennially. 

If household survey data is not available, one has to draw a relatively small sample to estimate the Gini index for that region using appropriate sampling technique. The sampling technique should be chosen depending on the size and socio-economic diversity of the country. For a brief review of several sampling techniques, we refer to Cochran (\citeyear{cochran1977g}). In order to compute Gini index for regions or smaller countries, with lesser social diversity, simple random sampling technique can be used to collect income or expenditure data. There exists literature on statistical inference for inequality indices which is computed from household income or expenditure by means of simple random sampling from the population of interest (e.g., Gastwirth \citeyear{gastwirth1972estimation}, Beach and Davidson, \citeyear{beach1983distribution}, Davidson and Duclos, \citeyear{davidson2000statistical}, Xu, \citeyear{xu2007} and Davidson, \citeyear{davidson2009reliable}). In this paper, we will use simple random sampling technique to collect income or expenditure data in order to estimate Gini index accurately.

It is well known that error in estimation decreases or in other words accuracy increases, when the sample size increases. This in turn increases the overall cost of sampling. To minimize the cost of sampling, one has to reduce the sample size which in turn may lead to higher estimation error. Thus, a method of estimation should be developed such that both the cost of sampling and the error in estimation are kept as low as possible. In other words, a procedure is required which can act as a trade-off between the estimation error and the sampling cost. To achieve this trade-off, fixed-sample methodologies can not be used, i.e., the sample size should not be fixed in advance. This problem falls in the domain of sequential analysis where it is known as minimum risk point estimation problem. For more details on the literature of sequential analysis, we refer to Ghosh and Sen ({\citeyear{ghosh1991handbook}), Ghosh et al. (\citeyear{ghosh1997sequential}), Mukhopadhyay and de Silva (\citeyear{mukhopadhyay2009sequential}), and others.

Unlike fixed-sample procedures, sequential procedures do not require sample size to be fixed in advance. Instead, in a sequential procedure, statistical analysis is continued as the observations are collected. Sampling is terminated according to a pre-defined criterion, also known as stopping rule. Sequential sampling allows the estimation process to finish early requiring small sample size. We are certainly not the first one to suggest sequential methods in econometrics. In fact, there are several articles published in several journals in economics and econometrics which pursued the idea of using sequential or multi-stage inference procedures. Examples include \cite{aguirregabiria2007sequential}, \cite{arcidiacono2003finite}, \cite{greene1998gender}, \cite{kanninen1993design}, etc.

Below, we provide a brief literature review of some relevant concepts and also our contribution to the literature of statistical inference and economics.
%==================================================

\subsection{Literature Review and Our Contributions}

The estimator of Gini index in \eqref{def:GiniEst} involves sample mean and Gini's mean difference which belong to a class of unbiased estimators known as U-statistics. Below, we briefly discuss the literature on U-statistics.
%----------------------------------------------------------------------------------------------------------------
\subsubsection{Literature on U-statistics}

The theory and practice of U-statistics began with the pioneering papers of Hoeffding (\citeyear{hoeffding1948class}, \citeyear{hoeffding1961strong}). In the above papers, Hoeffding derived a general method for obtaining unbiased estimators for a parameter $\theta$ associated with an unknown distribution function $F$. Suppose that $X_{1},\ldots,X_{n}$ are \textit{independent and identically distributed} (i.i.d.) random variables from a population with a common distribution function $F$ with an associated parameter $\theta \equiv \theta (F)$, $\theta \in \Theta \subseteq
\mathcal{R}$. Then the U-statistic associated with $\theta $ is
written as follows
\[U\equiv U_{n}^{(m)}=\binom{n}{m}^{-1}\underset{(n, m)}{\sum }
g^{(m)}(X_{i_{1}},...,X_{i_{m}}),\]
where $\underset{(n, m)}{\sum }$ denotes the summation over all possible
combinations of indices $(i_1,\ldots, i_m)$ such that $1\leq i_{1}<i_{2}<\cdots<i_{m}\leq n$, and $m<n$. Here, $
g^{(m)}(.)$ is a symmetric kernel of degree $m$ such that $E_{F}\left[ g^{(m)}(X_{1},...,X_{m})\right] =\theta (F)$ for all $F$. Thus both GMD and the sample mean are U-Statistics with kernels of degree 2 and 1 respectively. Detailed literature on U-statistics can be found in standard textbooks such as Hollander and Wolfe (\citeyear{hollandernonparametric}), Lee (\citeyear{lee1990u}), and others.

Apart from being unbiased estimators, U-statistics are reverse martingales with respect to some non-increasing filtration as proven in Lee (p. 119, \citeyear{lee1990u}). We exploit the reverse martingale property of U-statistics to derive the asymptotic results in section 3. For more literature on reverse martingales, we refer to classical textbooks on probability theory and stochastic processes such as Loeve (\citeyear{loeveprobability}), Doob (\citeyear{doob1953stochastic}), and others.

As discussed before, we estimate the Gini index by a sequential method known as minimum risk point estimation (MRPE). This estimation technique is not new in the literature of sequential analysis. Below, we briefly discuss the developments on minimum risk point estimation.

%-------------------------------------------------------------------------------------------------------------------
\subsubsection{Literature on MRPE}

Minimum risk point estimation was first introduced by Robbins (\citeyear{robbins1959}). He suggested a purely sequential procedure for estimating mean of a normal distribution. Ghosh and Mukhopadhyay (\citeyear{ghosh1979sequential}) generalized this idea to a distribution free scenario and developed a purely sequential procedure for minimum risk point estimation of a population mean. Later, Sen and Ghosh (\citeyear{sen1981sequential}) extended the sequential procedure of Ghosh and Mukhopadhyay (\citeyear{ghosh1979sequential}) to accommodate the minimum risk point estimation of any estimable parameter using U-statistics. For more details on MRPE, we refer our readers to Sen (\citeyear{sen1981sequentialbook}), Ghosh et al. (\citeyear{ghosh1997sequential}), Mukhopadhyay and de Silva (\citeyear{mukhopadhyay2009sequential}), and others.

In minimum risk point estimation problems, a cost function is defined which depends on sample size and error in estimation. In this paper, we will use mean square error (MSE) of Gini index as an error in estimation. We are interested in finding an estimate of unknown optimal sample size which minimizes the asymptotic cost function to estimate Gini index of the population.
%--------------------------------------------------------------------------------------------------------
\subsubsection{Contributions of this paper}

Several fixed-sample methods are developed for estimation of Gini index assuming that the incomes from the sampled individuals are independent and identically distributed (i.i.d.). Examples of such methods can be found in Gastwirth \citeyear{gastwirth1972estimation}, Beach and Davidson, \citeyear{beach1983distribution}, Davidson and Duclos, \citeyear{davidson2000statistical}, Xu, \citeyear{xu2007} and Davidson, \citeyear{davidson2009reliable}. However, these methods cannot be used for minimum risk point estimation of an inequality index. For a brief overview we refer to \cite{Chde2014wseas}. In this article, we propose a sequential procedure that yields an asymptotic minimum risk point estimator of Gini index by minimizing the asymptotic risk function defined as a cost function plus a risk term for estimation error. Under very mild assumptions, we prove that the estimated final sample size for our procedure approaches the theoretically optimal sample size that minimizes the cost function. Moreover, we prove that the expected cost for estimating the Gini index using the estimated final sample size is asymptotically close to theoretically expected cost for estimating the Gini index, that is with theoretically optimal sample size. All theoretical results are validated by extensive simulation study.

The remainder of this paper is organized as follows. Section 2 develops a purely sequential procedure which minimizes both the estimation error and the overall sampling cost. Section 3 presents the theoretical properties enjoyed by the proposed sequential procedure. Performance of our method is assessed via simulation study in Section 4. The next section explores the possibility of satisfying stronger asymptotic optimality properties. In Section 6, we provide some concluding remarks. The appendix contains some auxiliary lemmas and detailed proofs of all theoretical results.

%====================================================
%=====================================================
\section{SEQUENTIAL METHOD OF ESTIMATION} \label{s:preliminary}

Suppose incomes from $n$ randomly selected individuals are collected. Let the incomes of $n$ persons be $X_1, \ldots, X_n$ with a common but unknown distribution function $F$. The estimator $G_n$ is a biased estimator of the population Gini index, $G_F$, and $E(G_n -G_F)^2$ is the mean square error (MSE) of $G_n$. The asymptotic expression for MSE of $G_n$ is given by,
\begin{lemma}
\label{lem:asym-risk}
$E(G_F -G_n)^2 = \frac{\xi^2}{n} + O\(\frac{1}{n^{3/2}}\)$, where
\ba
\xi^2 = \frac{\sig_1^2}{\mu^2} + \frac{\Del^2 \sig^2}{4\mu^4} - \frac{\Del}{\mu^3}( \tau - \mu\Del),
\label{def:xi}
\ea
\[
\sig_1^2 = V\L E\( \abs{X_1 - X_2}\,\, \big\vert\, X_1  \) \R, \,\, \tau= E\( X_1 \abs{X_1 - X_2}\), \, \text{ and }\, \sig^2 =V(X),
\]
provided $E(X_1^{12})$ and $E(X_1^{-20})$ exist.
\end{lemma}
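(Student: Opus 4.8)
The plan is to view $G_n=\widehat{\Delta}_n/(2\overline{X}_n)$ as a smooth function of the two U-statistics $\widehat{\Delta}_n$ (degree $2$) and $\overline{X}_n$ (degree $1$), linearize it about the population pair $(\Delta,\mu)$, and keep careful track of the remainder generated by the random denominator. First I would record the exact identity
\ba
G_n-G_F=\frac{\mu(\widehat{\Delta}_n-\Delta)-\Delta(\overline{X}_n-\mu)}{2\mu\,\overline{X}_n},
\nonumber
\ea
and then use $1/\overline{X}_n=1/\mu-(\overline{X}_n-\mu)/(\mu\overline{X}_n)$ to split $G_n-G_F=L_n+R_n$, where the linear part is
\ba
L_n=\frac{\widehat{\Delta}_n-\Delta}{2\mu}-\frac{\Del(\overline{X}_n-\mu)}{2\mu^2},
\nonumber
\ea
and $R_n=-[\mu(\widehat{\Delta}_n-\Delta)-\Del(\overline{X}_n-\mu)](\overline{X}_n-\mu)/(2\mu^2\overline{X}_n)$ is a quadratic-order remainder. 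Then $E(G_F-G_n)^2=E[L_n^2]+2E[L_nR_n]+E[R_n^2]$, and the goal is to show that $E[L_n^2]$ supplies $\xi^2/n$ while the two remaining pieces are $O(n^{-3/2})$.

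The leading term is a finite-sample variance computation. Because $\widehat{\Delta}_n$ and $\overline{X}_n$ are unbiased, $E[L_n^2]=\{\mu^2V(\widehat{\Delta}_n)-2\mu\Del\,\mathrm{Cov}(\widehat{\Delta}_n,\overline{X}_n)+\Del^2V(\overline{X}_n)\}/(4\mu^4)$. I would invoke the Hoeffding decomposition and the standard asymptotic variance formula for U-statistics: writing $g_1(x)=E(\abs{x-X_2})$ for the first projection of the GMD kernel, one gets $V(\widehat{\Delta}_n)=4\sig_1^2/n+O(n^{-2})$ with $\sig_1^2=V(g_1(X_1))$, while $V(\overline{X}_n)=\sig^2/n$. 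The cross-covariance of two U-statistics is, to leading order, the product of the degrees times the covariance of their first projections, so $\mathrm{Cov}(\widehat{\Delta}_n,\overline{X}_n)=2\,\mathrm{Cov}(g_1(X_1),X_1)/n+O(n^{-2})=2(\tau-\mu\Del)/n+O(n^{-2})$, since $E[X_1g_1(X_1)]=E(X_1\abs{X_1-X_2})=\tau$. Substituting these three expressions collapses $E[L_n^2]$ exactly to $\xi^2/n+O(n^{-2})$.

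It then remains to bound the cross term $2E[L_nR_n]$ and $E[R_n^2]$. Since $L_n=O_p(n^{-1/2})$ and $R_n=O_p(n^{-1})$, the binding piece is the cross term, which is $O(n^{-3/2})$ by a size count. To make this rigorous I would apply Cauchy--Schwarz / H\"older so as to decouple the numerator factors, whose moments I control by Marcinkiewicz--Zygmund / Grams--Serfling type moment inequalities for centered U-statistics (these give $E\abs{\widehat{\Delta}_n-\Delta}^{2k}=O(n^{-k})$ and $E\abs{\overline{X}_n-\mu}^{2k}=O(n^{-k})$, and are where the positive-moment hypothesis $E(X_1^{12})$ is spent), from the reciprocal $1/\overline{X}_n$ that appears in $R_n$.

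The main obstacle is legitimizing that random denominator. Because $\overline{X}_n$ can in principle be near $0$, the factors $1/\overline{X}_n$ and $1/\overline{X}_n^2$ in $R_n$ are not automatically integrable; this is exactly what the negative-moment hypothesis $E(X_1^{-20})$ is for. My plan is to split the probability space according to whether $\overline{X}_n$ lies within $\mu/2$ of $\mu$: on the good event the denominator is bounded below, so H\"older cleanly separates the numerator moments of order $O(n^{-3/2})$ from a bounded negative moment of $\overline{X}_n$; on the complementary event a Markov / large-deviation bound, fed by the high positive and negative moments, shows the contribution is of smaller order than $n^{-3/2}$. Controlling $E[\overline{X}_n^{-2q}]$ uniformly in $n$ and bounding $P(\overline{X}_n\le\mu/2)$ with the stated moment conditions is the crux; once this is in hand, assembling $E[L_n^2]$, $E[L_nR_n]$, and $E[R_n^2]$ yields $E(G_F-G_n)^2=\xi^2/n+O(n^{-3/2})$.
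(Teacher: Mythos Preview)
Your proposal is correct and follows essentially the same route as the paper: linearize $G_n$ about $(\Delta,\mu)$ to isolate the same linear piece $L_n$, read off $\xi^2/n$ from the U-statistic variance/covariance formulas, and bound the cross and squared remainder terms by Cauchy--Schwarz together with $O(n^{-k})$ central-moment bounds for U-statistics and negative moments of $\overline{X}_n$. The only cosmetic difference is that the paper writes the remainder via a second-order Taylor expansion with a Lagrange intermediate point $(a,b)$ and splits on the events $\{\overline{X}_n\gtrless\mu\}$, $\{\widehat{\Delta}_n\gtrless\Delta\}$ rather than your exact algebraic remainder with $1/\overline{X}_n$ and the event $\{|\overline{X}_n-\mu|<\mu/2\}$; both devices serve the same purpose of controlling the random denominator, and the paper's uniform bound $E(\overline{X}_n^{-r})\le E(X_1^{-r})$ (via AM--GM) supplies exactly the negative-moment control you identify as the crux.
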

The proof of the lemma is given in the appendix. If the sample size is large, we receive more and more information about $G_F$ and, therefore, expect the squared error loss $(G_F -G_n)^2$ due to estimation to be small. However, higher sample size leads to higher sampling cost. Therefore, it is desirable to consider a loss function that takes into account both loss due to error in estimation and the sampling cost. Suppose $c$ is the known cost of sampling each observation. Our goal is to find an estimation procedure which minimizes both the MSE and also the sampling cost. We define a cost function depending on the MSE and the cost of sampling, also known as the risk function, as
\ba
R_n(G_F) = A E(G_F -G_n)^2 + cn.
\label{def:risk}
\ea
Here, $A$ is a known positive constant and is expressed in monetary terms which represents the weight assigned by the researchers or analysts regarding the probable cost per unit squared error loss due to estimation. Thus, the first term $AE(G_F -G_n)^2$ represents the loss in estimating $G_F$ by $G_n$, and the second term $cn$ represents the cost of sampling $n$ observations. The risk function thus gives the expected cost of estimating $G_F$ using the estimator $G_n$ based on incomes from $n$ individuals.  Using the asymptotic expression of MSE of $G_n$ expressed in \eqref{def:xi}, the fixed-sample size risk defined in \eqref{def:risk} becomes
\ba
R_n(G_F) =  A \frac{\xi^2}{n} + cn+ O\(\frac{1}{n^{3/2}}\).
\label{def:asym-risk}
\ea
Thus, \eqref{def:asym-risk} gives the expected cost or the risk, to estimate the unknown value of the population Gini index using $G_n$ based on $n$ observations. Our goal is to find the sample size for which the approximate expected cost (ignoring the $O\(\frac{1}{n^{3/2}}\)$ term) defined in \eqref{def:asym-risk}, i.e.,  $h(n) = A \frac{\xi^2}{n} + cn$ is minimized for all distributions that satisfy the conditions of lemma \ref{lem:asym-risk}.

Considering $n$ as a non-negative continuous variable, the strictly convex function $h(n)$ can be minimized at $n=n_c\left(=\sqrt{\frac{A}{c}}\,\xi\right)$. Thus $n_c$ is the required optimal sample size that should be collected using simple random sampling from the population in order to minimize the expected cost to estimate $G_F$. Thus the approximate expected cost of estimating the Gini index using a sample of size $n_c$ or the asymptotic minimum risk is
\begin{align}
R_{n_c}^{\ast}(G_F)=  A \frac{\xi^2}{n_c} + cn_c = 2cn_c.
\label{def:min-risk}
\end{align}
If the parameter $\xi$ were known in advance, one could simply collect a sample of size $n_c$ which is the minimum sample size to attain the asymptotic minimum risk. Since $\xi$ is not known, we need to collect samples in at least two stages where the first stage is to estimate $\xi$ and $n_c$ based on a pilot sample. In fact, Dantzig (\citeyear{dant1940}) proved that fixed-sample procedures cannot minimize the risk in \eqref{def:asym-risk}, not even asymptotically. Therefore, we propose a purely sequential procedure that yields minimum risk at least asymptotically.
% and the second stage is to collect additional samples to have the total sample size as the estimated $n_c$.that collects observations one after another until a stopping criterion is satisfied
%====================================================

Since $\xi$ is unknown, we first provide an estimator of $\xi$ that is strongly consistent. The estimator of $\xi$ is based on U-statistics
and can also be found in Xu (\citeyear{xu2007}), and Sproule (\citeyear{sproule1969sequential}). Proceeding along the lines of Sproule (\citeyear{sproule1969sequential}), let us define a U-statistic, for each $j=1,2,\ldots,n$,
\begin{equation*}
\widehat{\Delta }_{n}^{(j)}=\binom{n-1}{2}^{-1}\sum\limits_{T_{j}}%
\left\vert X_{i_1} - X_{i_2} \right\vert,
\end{equation*}
where ${T}_{j}{ =\{(i}_{1}{ ,i}_{2}{ ):1\leq i}_{1}%
{<i}_{2}{\leq n}$ and ${i}_{1}{,i}_{2}{ %
\neq j\}}$. Also, define $ W_{jn}{ =n}\widehat{{\Delta }}_{n}{-(n-2)}
\widehat{{\Delta }}_{n}^{(j)}$, for $j=1,\ldots,n$, and $\overline{W}_n = n^{-1}\sum_{j=1}^n W_{jn}$. According to Sproule (\citeyear{sproule1969sequential}), a strongly consistent estimator of $4\sigma_1^2$ is
\[
{s}_{{wn}}^{{ 2}}={(n-1)}^{-1}\sum \limits_{i=1}^{n}( W_{jn} -\overline{W}_n )^2.
\]
Using Xu (2007),
\begin{equation*}
\widehat{\tau }_{n}=\frac{2}{n(n-1)}\sum\limits_{(n,2)}\frac{1}{2}(%
{ X}_{i_{1}}{ +X}_{i_{2}}{ )}\left\vert { X}_{i_{1}}%
{ -X}_{i_{2}}\right\vert
\end{equation*}%
is an estimator of ${ \tau }$. Let $S_n^2$ be the sample variance. Thus, the estimator of ${ \xi }^{%
{ 2}}$\ is%
\begin{equation}
{ V}_{{ n}}^{{ 2}}=\frac{\widehat{\Delta }%
_{n}^{2}S_{n}^{2}}{4\overline{X}_{n}^{4}}-\frac{\widehat{\Delta }_{n}}{%
\overline{X}_{n}^{3}}\widehat{\tau }_{n}+\frac{\widehat{\Delta }_{n}^{2}}{%
\overline{X}_{n}^{2}}+\frac{s_{wn}^{2}}{4\overline{X}_{n}^{2}}.
\label{est-of-xi2}
\end{equation}
%Suppose $\mathcal{F}_n$ is the $\sigma $-algebra generated by ordered collection of $({ X}_{1}{ ,X}_{2}{ ,...,X}_{n})$ and ${ X}_{n+1}{ ,X}_{n+2}{ ,\ldots}.$ Then using 
%(1966) and Kingman (1969), $\left\{ \overline{{ X}}_{n}{
%,\mathcal{F} }_{n}\right\} $, $\left\{ { S}_{{ n}}^{{ 2}}%
%{ ,\mathcal{F} }_{n}\right\} $, $\left\{ \text{$\widehat{{ \tau
%}}_{n}{ ,\mathcal{F} }_{n}$}\right\} $, $\left\{ \widehat{{ %
%\Delta }}_{n},{ \mathcal{F} }_{n}\right\} $ and their corresponding
%convex functions are all reverse submartingales. Therefore, 
Using Sproule (1969) and theorem 3.2.1 of Sen (1981, p. 50), we conclude that $V_{n}^{2}$ is a strongly consistent
estimator of $\xi ^{2}$.

We outline the purely sequential estimation procedure of the Gini Index of the population as follows:\\
Step 1:  In the first step, often called the pilot sample step, incomes from a sample of m individuals are collected.
This sample is called the pilot sample. Based on this pilot sample of size $m$, an estimate of $\xi^2$ obtained by computing ${V}_{{m}}^{{2}}$. Check the condition, $m\geq \sqrt{\frac{A}{c}}V_{m}$. If $m < \sqrt{\frac{A}{c}}V_{m}$ then go to the next step. Otherwise, if $m \geq \sqrt{\frac{A}{c}}V_{m}$, then stop sampling and set the the final sample size equal to $m$.
\\
Step 2: Obtain income from one randomly selected individuals. Update the estimate of $\xi^2$ and verify the condition based on $m+1$ observations.  If $m+1 \geq \sqrt{\frac{A}{c}}V_{m+1}$ stop further sampling and set the final sample size equal to $m+1$. If $m+1 < \sqrt{\frac{A}{c}}\left(V_{m+1}\right)$ then continue the sampling process by sampling $1$ more individuals and simultaneously update the condition.
\\
The sampling process is continued until the updated condition is satisfied.\\
Formally, we define the stopping rule $N$, for every $c>0$, as
\begin{equation}
\label{stopping-rule1}
N \equiv N(c) \, \text{ is the smallest integer }\, n (\geq m) \text{ such that } %
n\geq \sqrt{\frac{A}{c}}V_{n}.
\end{equation}%
Here, $m$ is the initial or pilot sample size. In some extreme situations, the estimator $V_n$ may be very small which may cause our procedure to stop too early. To avoid this problem, we propose a slightly modified stopping rule $N_{c}$ as
\begin{equation}
N_{c}\, \text{ is the smallest integer }\, n(\geq m) \ni  n\geq \sqrt{\frac{A}{c}}%
\left( V_{n}+n^{-\gamma }\right),
\label{stopping-rule}
\end{equation}
where $\gamma\in(0,0.5)$ is a suitable constant. The inclusion of the term $n^{-\gamma }$ ensures that we do not stop too early due to small value of $V_n$.
%====================================================
%=====================================================

\section{THEORETICAL RESULTS}\label{results}

For a given cost $c$ per observation, the risk or the expected cost for estimating the Gini index $G_F$ using an estimator based on the final sample size $N_{c}$ is given by
\begin{equation}
R_{N_{c}}(G_{F})=AE(G_{F}-G_{N_{c}})^{2}+cE(N_{c}).
\label{risk-at-stopping}
\end{equation}
Thus, the estimator $G_{N_{c}}$ is asymptotically minimum risk point estimator (AMRPE) if the ratio regret is
asymptotically 1, i.e., if%
\begin{equation}
\underset{c\rightarrow 0}{\lim }R_{N_{c}}(G_{F})/R_{n_{c}}(G_{F})=1.
\label{ratio-regret}
\end{equation}
In other words, estimator $G_{N_{c}}$ is AMRPE (refer Sen, 1981) if the expected cost for estimating the Gini index $G_F$ using an estimator based on the final sample size $N_{c}$ is asymptotically close to expected cost for estimating $G_F$ using the optimal sample size, $n_c$. In decision theoretic framework, the ratio in \eqref{ratio-regret} is known as ratio regret which is the ratio between the actual payoff and the minimum payoff due to some optimal strategy (Loomes and Sugden, \citeyear{loomes1982regret}). 
%http://en.wikipedia.org/wiki/Regret\_\%28decision\_theory\%29).

Before discussing the asymptotic optimality properties of our method, we prove in the following lemma that if observations are collected using \eqref{stopping-rule}, sampling will stop at some finite time with probability one.
\begin{lemma}
\label{lem:sample}
Under the assumption that $\xi <\infty$, for any $c>0$, the stopping time $N_c$ is finite, i.e., $P(N_c < \infty) =1$.
\end{lemma}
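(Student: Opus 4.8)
The plan is to argue by contradiction, exploiting the fact established earlier in the excerpt (via Sproule, 1969, and Theorem 3.2.1 of Sen, 1981) that $V_n^2$ is a strongly consistent estimator of $\xi^2$, so that $V_n \to \xi$ almost surely. Since $\xi < \infty$ by hypothesis, the sequence $\{V_n\}$ converges to a finite limit almost surely and is therefore bounded along almost every sample path. This boundedness is the crux: it is incompatible with the stopping inequality failing for arbitrarily large $n$.

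First I would translate the event $\{N_c = \infty\}$ into an analytic condition on the sample path. By the definition of the stopping rule in \eqref{stopping-rule}, $N_c = \infty$ holds precisely on the event where the stopping inequality fails for every $n \geq m$; that is, for all $n \geq m$,
\[
n < \sqrt{\frac{A}{c}}\left(V_n + n^{-\gamma}\right).
\]
Rearranging, this is equivalent to $V_n > \sqrt{c/A}\,\, n - n^{-\gamma}$ holding simultaneously for all $n \geq m$.

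Next I would examine the right-hand side as $n \to \infty$. Since $c>0$ and $A>0$ are fixed, $\sqrt{c/A}>0$, and since $\gamma \in (0,0.5)$ the term $n^{-\gamma}\to 0$; hence the lower bound $\sqrt{c/A}\,\, n - n^{-\gamma}$ diverges to $+\infty$. Consequently, on the event $\{N_c = \infty\}$ we would be forced to have $V_n \to \infty$, which directly contradicts the almost sure convergence $V_n \to \xi < \infty$. Therefore the event $\{N_c = \infty\}$ can carry at most probability zero, which yields $P(N_c < \infty) = 1$.

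The main obstacle, and the only place requiring genuine care, is the justification of the almost sure convergence $V_n \to \xi$: one must confirm that the moment assumptions underlying the strong consistency of $V_n^2$ are in force, and, when $\xi>0$, that the positive square root $V_n$ is eventually well defined and inherits the convergence $V_n \to \xi$. Once that is granted, the remainder is routine, since an almost surely convergent sequence cannot dominate a deterministic lower bound that tends to infinity.
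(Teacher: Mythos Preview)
Your argument is correct and follows essentially the same route as the paper's proof: both use the strong consistency $V_n \to \xi$ almost surely together with the observation that on $\{N_c=\infty\}$ (equivalently $\{N_c>n\}$ for every $n$) the stopping inequality forces $V_n$ to exceed a deterministic sequence tending to infinity, which has probability zero. The paper phrases this as $P(N_c=\infty)=\lim_{n\to\infty}P(N_c>n)\le \lim_{n\to\infty}P\big(n<\sqrt{A/c}\,(V_n+n^{-\gamma})\big)=0$, while you phrase it as a contradiction, but the content is identical.
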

Proof of this lemma is given in Appendix. This lemma is very crucial for any sequential procedure because it assures that the practitioner will not need to sample indefinitely. Below we provide the main theorem related to the asymptotic optimality properties of our procedure.
\begin{theorem}
\label{thm:main}
The stopping rule \eqref{stopping-rule} yields:
\begin{itemize}
\item[(i)] $N_{c}/n_{c}\to 1$ almost surely as $c \downarrow 0$.

\item[(ii)] $E(N_{c}/n_{c})\to 1$ as $c\downarrow 0$. [Asymptotic First-order Efficiency]

\item[(iii)] If $\gamma \in (0, \frac{1}{2})$, $R_{N_{c}}(G_{F})/R_{n_{c}}(G_{F})\to 1$ as $c\downarrow 0$. [Asymptotic First-order Risk Efficiency]
\end{itemize}
provided, $E(X^{16})$ and $E(X^{-24})$ exist.
\end{theorem}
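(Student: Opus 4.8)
The plan is to establish the three claims in sequence, since each builds on the previous one. The theorem is a standard minimum-risk-point-estimation package in the style of Ghosh and Mukhopadhyay and Sen and Ghosh, adapted to the Gini index, so I would follow the classical template: first the strong consistency of the stopping time, then uniform integrability to upgrade to convergence in expectation, and finally a careful expansion of the risk ratio. Throughout, the engine is the strong consistency of $V_n^2$ as an estimator of $\xi^2$, already asserted in the excerpt, together with the reverse-martingale structure of the underlying $U$-statistics.

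For part (i), I would argue as follows. By the definition of the modified stopping rule in \eqref{stopping-rule}, at $n=N_c$ we have $N_c \ge \sqrt{A/c}\,(V_{N_c}+N_c^{-\gamma})$, and by minimality $N_c-1 < \sqrt{A/c}\,(V_{N_c-1}+(N_c-1)^{-\gamma})$ whenever $N_c>m$. Dividing both inequalities by $n_c=\sqrt{A/c}\,\xi$ and recalling $N_c\to\infty$ almost surely as $c\downarrow 0$ (which follows because $\sqrt{A/c}\to\infty$ while $V_n+n^{-\gamma}$ stays bounded away from $0$), the $n^{-\gamma}$ correction vanishes and the sandwiching gives
\begin{equation*}
\frac{V_{N_c}}{\xi}\le \frac{N_c}{n_c} \quad\text{and}\quad \frac{N_c}{n_c}\le \frac{N_c-1}{n_c}+\frac{1}{n_c} < \frac{V_{N_c-1}}{\xi}+o(1).
\end{equation*}
Since $V_n\to\xi$ almost surely and $V_{N_c},V_{N_c-1}\to\xi$ along the random subsequence (using that $N_c\uparrow\infty$ a.s.), both bounds converge to $1$, yielding $N_c/n_c\to 1$ almost surely.

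For part (ii), having the almost-sure convergence in hand, I would obtain convergence of the expectation by showing that the family $\{N_c/n_c : c>0\}$ is uniformly integrable as $c\downarrow 0$. The standard route is to produce a moment bound $\sup_c E[(N_c/n_c)^p]<\infty$ for some $p>1$, which reduces to controlling the left tail of $V_{N_c-1}$, i.e. showing the procedure does not overshoot badly; this is where the reverse-submartingale maximal inequality for the $U$-statistics that define $V_n^2$ is used, together with the existence of the high moments $E(X^{16})$ and $E(X^{-24})$ to bound the relevant fourth (or higher) moments of the component statistics uniformly. I expect this uniform-integrability step to be the main obstacle: unlike the almost-sure statement, it demands genuine quantitative tail control of $V_n$ uniformly in $n$, and the negative-moment hypothesis $E(X^{-24})$ is precisely what is needed to keep the denominators $\overline X_n^4$ appearing in \eqref{est-of-xi2} under control.

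For part (iii), I would write the risk ratio using \eqref{risk-at-stopping} and the asymptotic risk expression. Splitting
\begin{equation*}
\frac{R_{N_c}(G_F)}{R_{n_c}(G_F)} = \frac{A\,E(G_F-G_{N_c})^2 + c\,E(N_c)}{2cn_c},
\end{equation*}
I would treat the two numerator terms separately. The sampling-cost term $cE(N_c)/(2cn_c)=E(N_c/n_c)/2\to 1/2$ by part (ii). For the estimation-error term, the plan is to invoke Lemma \ref{lem:asym-risk} at the random time $N_c$ to get $A\,E(G_F-G_{N_c})^2 \approx A\xi^2 E(1/N_c)$, then show $n_c E(1/N_c)\to 1$; combined with $2cn_c = A\xi^2/n_c$ this contributes another $1/2$, so the ratio tends to $1$. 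The delicate point here is justifying the expansion of the MSE at the stopping time rather than at a fixed $n$ — one must control the remainder $O(N_c^{-3/2})$ in expectation and argue that $E(n_c/N_c)\to 1$, which again leans on the uniform integrability of $N_c/n_c$ (now of $n_c/N_c$, handled by the lower moment bound plus $N_c\ge m$) established while proving (ii). The restriction $\gamma\in(0,1/2)$ enters precisely to keep the overshoot term $\sqrt{A/c}\,N_c^{-\gamma}$ of lower order than $n_c$, so that it does not inflate either the overshoot in the stopping boundary or the remainder in the risk expansion.
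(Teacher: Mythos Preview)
Your treatment of part (i) matches the paper's proof essentially line for line. For part (ii), your plan to establish uniform integrability via a $p$th-moment bound is workable but more elaborate than what the paper does: the paper simply observes from the upper inequality in \eqref{T1} that $N_c/n_c$ is dominated almost surely by $m+\xi^{-1}\bigl(\sup_{n\ge m}V_n+(m-1)^{-\gamma}\bigr)$, proves $E[\sup_{n\ge m}V_n^2]<\infty$ in a separate lemma (Lemma~\ref{A5}) using reverse-submartingale maximal inequalities and the moment hypotheses, and then applies dominated convergence directly. Incidentally, it is the \emph{right} tail of $V_{N_c-1}$ that governs overshoot, not the left.

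Part (iii) is where your proposal has a genuine gap. You suggest invoking Lemma~\ref{lem:asym-risk} at the random time $N_c$ to write $E(G_{N_c}-G_F)^2\approx \xi^2 E(1/N_c)$, and then reduce to showing $n_cE(1/N_c)\to 1$. But Lemma~\ref{lem:asym-risk} is a statement about the unconditional moment at a \emph{deterministic} sample size; since $N_c$ is a stopping time that depends on the same data as $G_{N_c}$, the conditional law of $(G_n-G_F)^2$ given $\{N_c=n\}$ is not the unconditional one, and no amount of remainder control on $O(N_c^{-3/2})$ repairs this. The paper avoids the issue entirely: it proves instead that $n_c\bigl\{E(G_{N_c}-G_F)^2-E(G_{n_c}-G_F)^2\bigr\}\to 0$ by a three-way event split on $\{N_c\le (1-\epsilon)n_c\}$, $\{(1-\epsilon)n_c\le N_c\le (1+\epsilon)n_c\}$, $\{N_c\ge(1+\epsilon)n_c\}$. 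The outer events are rare (Lemma~\ref{A3} gives $P(N_c\notin[(1-\epsilon)n_c,(1+\epsilon)n_c])=O(c^{r/(4+4\gamma)})$) and on them the squared error is controlled via Cauchy--Schwarz against maximal fourth-moment bounds (Lemma~\ref{A4}); the combination of these orders is exactly where the restriction $\gamma\in(0,\tfrac12)$ is used, to make the resulting exponent positive. On the central event the key input is Lemma~\ref{A7}, a maximal inequality $E[\max_{(1-\epsilon)n_c\le n\le(1+\epsilon)n_c}(G_n-G_{n_c})^2]=O(\sqrt{\epsilon}/n_c)$, proved from reverse-martingale maximal inequalities for the constituent $U$-statistics; letting $\epsilon\downarrow 0$ finishes. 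This machinery---the tail bound for $N_c$, the maximal moment bounds for $G_n$, and the local oscillation bound for $G_n-G_{n_c}$---is the real content of part (iii), and your sketch omits it.
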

\begin{proof}
Proof of this theorem is technical and, therefore, it is given in Appendix.
\end{proof}
%The assumption $E(\sup_{n}$ $V_{_{n}}^{2})<\infty $ is required to prove the
%first-order efficiency property i.e. part (ii) of theorem 1.\medskip\
The parts (i) and (ii) of this theorem imply that the final sample size of our procedure is asymptotically same as the minimum sample size required to minimize the asymptotic risk defined in \eqref{def:asym-risk}. The part (iii) proves that the risk attained by our procedure is asymptotically same as the minimum risk. Therefore, the Gini index estimator $G_{N_c}$ is indeed AMRPE. The optimality properties in part (ii) and (iii) are well known in the sequential literature as asymptotic first-order efficiency and asymptotic first-order risk efficiency respectively (see Mukhopadhyay and de Silva, 2009). Theorem \ref{thm:main} also holds for the stopping rule defined in \eqref{stopping-rule1}.

%====================================================
%=====================================================

\section{PERFORMANCE VIA SIMULATIONS} \label{simulations}

%All theoretical results in Section \ref{results} hold when cost per observation is close to zero. Little is known about the optimality properties of our procedure when $c$ does not go to 0. 
In this section, we evaluate performance of our estimation strategy for moderate sample size (i.e., $c$ is small but not too small) via simulation study.
%we implemented the sequential procedure defined in (2.6) and estimated the
%average sample size ($\overline{n}$), the maximum sample size ($\max (n)$),
%the standard error ($s(\overline{n})$) of $\overline{n}$, and the
%ratio-regret ($\overline{{ r}}_{{ n}}{ /R}_{n_{c}}$)
%--------------- TABLE 1 ---------------------------------
\begin{table}
\centering
\caption{Estimated sample variances and covariances}
\vs{.4}
\begin{tabular}{|c|c|c|c|c|c|c|}
%\multicolumn{7}{|c|}{Table 1. estimated sample variances and covariances} \\
\hline 
 &&&&&& \\ [-1.2ex]
%\multicolumn{7}{|c|}{$c=0.1;\, A=50000;\, m=10$} \\ \hline
Distribution & $\,\;\,%
\begin{tabular}{c}
$\overline{s_{{ wN}}^{2}}$ \\ [1ex]
\textbf{s(}$s_{{ wN}}^{2}$\textbf{)}%
\end{tabular}%
$ & $4\sigma _{1}^{2}$ & $%
\begin{tabular}{c}
$\overline{\widehat{{ \tau }}_{N}}$ \\ [1ex]
\textbf{s(}$\widehat{{ \tau }}_{N}$\textbf{)}%
\end{tabular}%
$ & $\tau $ & $%
\begin{tabular}{c}
$\overline{{ V}_{{ N}}^{{ 2}}}$ \\ [1ex]
\textbf{s(}$V_{{ N}}^{{ 2}}$\textbf{)}%
\end{tabular}%
$ & $\xi ^{2}$ 
\\
&&&&&& \\ [-1.2ex] 
\hline
&&&&&& \\ [-1.2ex]
Exponential & $%
\begin{tabular}{c}
0.0521 \\
0.0002%
\end{tabular}%
$ & 0.0532 & $%
\begin{tabular}{c}
0.0596 \\
0.0001%
\end{tabular}%
$ & 0.06000 & $%
\begin{tabular}{c}
0.0843 \\
0.0002%
\end{tabular}%
$ & 0.0833 
\\
&&&&&& \\ [-1.2ex] 
\hline
&&&&&& \\ [-1.2ex]
Gamma & $%
\begin{tabular}{c}
3.4172 \\
0.0157%
\end{tabular}%
$ & 3.5036 & $%
\begin{tabular}{c}
7.8110 \\
0.0147%
\end{tabular}%
$ & 7.8205 & $%
\begin{tabular}{c}
0.0463 \\
0.0001%
\end{tabular}%
$ & 0.0468 
\\
&&&&&& \\ [-1ex] 
\hline
&&&&&& \\ [-1ex]
Lognormal & $%
\begin{tabular}{c}
52.11274 \\
0.1173%
\end{tabular}%
$ & 52.8108 & $%
\begin{tabular}{c}
84.9292 \\
0.0694%
\end{tabular}%
$ & 85.2236 & $%
\begin{tabular}{c}
0.0498 \\
0.00009%
\end{tabular}%
$ & 0.0526 
\\
\hline
\end{tabular}%
\end{table}

%----------- TABLE 2 -----------------------------
%\QTP{Body Math}
\begin{table}
\centering
\caption{ Estimated average final sample size and the ratio regret}
\vs{.4}
\begin{tabular}{|c|c|c|c|c|c|c|}
\hline
 &&&&&& \\ [-1.2ex]
%\multicolumn{7}{|c|}{Table 2. estimated average f$\text{inal sample size, maximum sample size,}$} \\
%\multicolumn{7}{|c|}{and ratio regret for using the sequential procedure \eqref{stopping-rule1}} \\ \hline
%\multicolumn{7}{|c|}{$c=0.1; \, A=50000; \, m=10$} \\ \hline
Distribution &
\begin{tabular}{l}
$\,\;\,\overline{N}$ \\  [1ex]
$\underset{}{\mathbf{s(}\overline{N}\mathbf{)}}$%
\end{tabular}
& $n_{c}$ & $\overline{N}/n_{c}$ & $\max (N)$ &
\begin{tabular}{c}
$\overline{{ r}}_{{ N}}$ \\  [1ex]
s($\overline{{ r}}_{{ N}}$)%
\end{tabular}
& $\frac{\overline{{ r}}_{{ N}}}{{ R^{\ast}}_{n_{c}}}$ 
\\
&&&&&& \\ [-1.4ex] 
\hline
&&&&&& \\ [-1.2ex]
Exponential &
\begin{tabular}{c}
205.4111 \\
0.2378%
\end{tabular}
& 204.08 & 1.0065 & 319 &
\begin{tabular}{c}
40.9317 \\
0.0474%
\end{tabular}
& 1.0028 
\\
&&&&&& \\ [-1.2ex] 
\hline
&&&&&& \\ [-1.2ex]
Gamma &
\begin{tabular}{c}
152.19 \\
0.1970%
\end{tabular}
& 152.97 & 0.9949 & 239 &
\begin{tabular}{c}
30.2765 \\
0.0391%
\end{tabular}
& 0.9904 
\\
&&&&&& \\ [-1.2ex] 
\hline
&&&&&& \\ [-1.2ex]
Lognormal &
\begin{tabular}{c}
162.3504 \\
0.1483%
\end{tabular}
& 163.10 & 0.9954 & 228 &
\begin{tabular}{c}
152.07 \\
0.1958%
\end{tabular}
& 0.9919 \\ \hline
\end{tabular}%
\end{table}
%------------------------------------------------------------
\noindent To implement the sequential procedure in \eqref{stopping-rule1}, we fix $c =0.1$, $A=50000$, and the pilot sample size $m=10$. The results in Table 1 and 2 are based on random samples from three income distributions: exponential (rate $=5$), gamma (shape $=2.649$, rate $=0.84$), and log-normal (mean $=2.185$, sd $=0.562$). Number of replications used in all Monte carlo simulations is 5000. Table 1 compares the true values of the parameters $\sigma_1^2$, $\tau$, and $\xi^2$ with their estimated values based on the final sample size $N$. $s\(s^2_{wN}\)$, $s\( \hat{\tau}_N \)$, and $s\(  V^2_N\)$ represent the standard errors of the estimators $s^2_{wN}$, $\hat{\tau}_N$, and $V^2_N$ respectively.

Table 1 shows that the average values of the estimators are close to
the true values of the parameters and, therefore, it indicates that $s_{wN}^{2}\rightarrow 4\sigma _{1}^{2}$, $\widehat{{ %
\tau }}_{N}\rightarrow \tau $ and ${ V}_{{ N}}^{{ 2}%
}\rightarrow \xi ^{2}$ as $c\downarrow 0.$

Table 2 presents the average final sample size $\overline{N}$ (estimates $E(N)$), the maximum sample size $\max (N)$ from 5000 replications, and the average risk $\overline{r}_N$ (estimates $R_N(G_F)$) obtained from the sample of size $N$. Moreover, $s(\overline{N})$ and $s(\overline{r}_N)$ represent the standard errors of $\overline{N}$ and $\overline{r}_N$ respectively. Table 2 shows that the average sample size $\overline{N}$ is almost the same as the optimal sample size $n_c$. Therefore, on average, our procedure requires only the minimum sample size $n_c$. The last column of Table 2 illustrates that, on average, the cost for estimating the Gini index $G_F$ using an estimator based on the estimated final sample size is asymptotically close to expected cost for estimating $G_F$ using the optimal sample size, $n_c$, or in other words, the ratio regret is very close to 1. This implies that the risk incurred by our method is almost the same as the minimum possible risk $R^{\ast}_{n_c}$ defined in \eqref{def:min-risk}. Thus, we find that the proposed sequential procedure performs remarkably well for the above mentioned income distributions.
%====================================================
%=====================================================

\section{EXTENSIONS AND DISCUSSIONS}

%====================================

\subsection{Exploring Asymptotic Second-Order Efficiency}

In sequential point estimation literature, a stopping rule $N_c$ is known as \emph{asymptotically second-order efficient} (see Ghosh and Mukhopadhyay, \citeyear{ghosh1981consistency}) if the difference between the expected final sample size $E(N_c)$ and the theoretically optimum fixed-sample size $n_c$ is asymptotically bounded, i.e., if $E(N_c) - n_c$ is bounded  as $c \downarrow 0$. Clearly, if a sequential method is second-order efficient, it is first-order efficient as well. However, the converse is not necessarily true. We explore this second-order efficiency property via Monte Carlo simulations. Under the same scenario as in Tables 1 and 2, we apply our method and estimate the difference $E(N_c) - n_c$ based on 500 replications. We repeat this process 10 times and present 10 observed values of $\overline{N}-n_c$ each estimating $E(N_c) - n_c$.
%-------------- TABLE 3---------------------------------------------------------
\begin{table}
\centering
\caption{ Estimated Values of $E[N_{c}]-n_{c}$ }
\vs{.4}
\begin{tabular}{|c|ccccc|}
\hline
 &&&&& \\ [-1.2ex]
Distribution & \multicolumn{5}{c|}{$E[N_{c}]-n_{c}$} 
\\
&&&&& \\ [-1.2ex] 
\hline
&&&&& \\ [-1.2ex]
Exponential & 1.9800 & 0.7660 & 1.5520 & 2.600 & 1.0380 \\
& 1.1560 & 0.6760 & 1.5100 & 0.7020 & 1.9020 
\\
&&&&& \\ [-1.2ex] 
\hline
&&&&& \\ [-1.2ex]
Gamma & -1.178 & -0.98 & -0.274 & -1.338 & -1.2 \\
& -0.834 & -0.438 &  -0.686 & -0.282 & -0.626 
\\
&&&&& \\ [-1.2ex] 
\hline
&&&&& \\ [-1.2ex]
Lognormal & -0.7211 & -0.9591 & -0.8331 & -0.7951 & 0.6771 \\
& -0.6611 & -0.1091 & -1.2671 & -1.2211 & -0.2031 \\
\hline
\end{tabular}%
\end{table}
%--------------------------------------------------------------------------------
Table 3 shows that the differences $E(N_c) - n_c$ are quite small for all three distributions. Therefore, simulation study strongly indicates that the proposed sequential procedure is asymptotically second-order efficient.

%==============================================

\subsection{Exploring Asymptotic Second-Order Risk Efficiency}

In sequential point estimation literature, a stopping rule $N_c$ is known as \emph{asymptotically second-order risk efficient} (see Ghosh and Mukhopadhyay, \citeyear{ghosh1981consistency}) if the difference regret, i.e.,  $R_{N_{c}}(G_{F})- R_{n_c}(G_F)$ is asymptotically bounded. This property implies asymptotic first-order risk efficiency. We explore this second-order risk efficiency property via Monte Carlo simulations. For each of the three distributions in Table 4, 10 observed values of $\overline{r}_{N_c} - R_{n_c}$ are presented, each estimating $R_{N_{c}}(G_{F})- R_{n_c}(G_F)$. Table 4 shows that the differences $R_{N_{c}}(G_{F})-R_{n_{c}}(G_F)$ are quite small for all three distributions. Monte Carlo simulations strongly indicates that the proposed sequential procedure is asymptotically second-order risk efficient.
%-------------- TABLE 4---------------------------------------------------------
\begin{table}
\centering
\caption{ Estimated Values of $R_{N_{c}}(G_{F})-R_{n_{c}}(G_F)$}
\vs{.4}
\begin{tabular}{|c|ccccc|}
\hline
 &&&&& \\ [-1.2ex]
%\multicolumn{6}{|c|}{Table 4. Estimated Values of $R_{N_{c}}(G_{F})-R_{n_{c}}(G_F)$} \\
%\hline
Distribution & \multicolumn{5}{c|}{$\underset{}{\overset{}{\overline{{ %
r}}_{{ N_c}}-{ R}_{n_{c}}}}$ values}
\\
&&&&& \\ [-1.2ex] 
\hline
&&&&& \\ [-1.2ex]
Exponential & 0.2397 & -0.0016 & 0.1568 & 0.3799 & 0.0619 \\
& 0.0796 & -0.0150 & 0.1564 & -0.0165 & 0.2302 
\\
&&&&& \\ [-1.2ex] 
\hline
&&&&& \\ [-1.2ex]
Gamma$\,\;$ & -0.3733 & -0.3387 & -0.1901 & -0.4009 & -0.3769 \\
& -0.3048 & -0.2218 & -0.2768 & -0.1933 & -0.2547 
\\
&&&&& \\ [-1.2ex] 
\hline
&&&&& \\ [-1.2ex]
Lognormal & -0.3080 & -0.2878 & -0.2796 & -0.2480 & -0.2494 \\
& -0.3704 & -0.3587 & -0.1540 & -0.2625 & -0.1379 \\ 
\hline
\end{tabular}
\end{table}
%--------------------------------------------------------------------------------

%==============================================

%\subsection{Choice of Pilot Sample Size}
%
%If pilot sample size $m$ is too small, then the number of sampling stages in a sequential procedure may be large. On the other hand, if pilot sample size is very large, we may end up using more samples than what we actually need to achieve certain goal. Such a wrong choice of pilot sample size may inflate the sampling cost. Therefore, a proper choice of pilot sample size is important.
%
% Using the stopping rule defined in \eqref{stopping-rule}, we have
%\begin{equation*}
%{ N\geq }\sqrt{\frac{{ A}}{{ c}}}{ (V}_{{ N}}
%{ +N}^{-\gamma }{ )\geq \sqrt{\frac{A}{c}}N}^{{ -\gamma }}.
%\end{equation*}
%Therefore, the final sample size should always be greater than $\left( { A/c}%
%\right) ^{{ 1/2(1+\gamma )}}$. Following Mukhopadhyay and de Silva (2009, p. 251) and Mukhopadhyay (2005),
%we recommend to use the pilot sample size
%\begin{equation*}
%{ m\equiv m(c)=}\max \left\{ m_{0},\left\langle \left( A/c\right)
%^{1/2(1+\gamma )}\right\rangle +1\right\},
%\end{equation*}
%where $m_0 (\geq 4)$ is the minimum possible sample size required to
%estimate $\xi ^{2}$. Theorem \ref{thm:main} holds for such choice of pilot sample size.

%====================================================
%=====================================================

\section{CONCLUDING REMARKS}

The Gini index or Gini concentration is a very popular measure of inequality. It is well known that error in estimation of Gini index decreases when the sample size increases. This inflates the overall cost of sampling. In order to compute Gini index for a region or a smaller country with lesser diversity at a specific point of time, we develop a procedure which computes the final sample size needed to minimize both the error of estimation as well as the cost of sampling via simple random sampling technique.

Without assuming any specific distribution for the data, we showed that the average final sample size using our procedure approaches the unknown optimal sample size that minimizes the cost function. Moreover, we proved that the expected cost for estimating the Gini index using the estimated final sample size is asymptotically close to the expected cost for estimating the Gini index using the unknown optimal sample size. Thus, based on the results mentioned above, we conclude that the proposed sequential estimation strategy is remarkably efficient in reducing both sampling cost and estimation error.
%====================================================
%=====================================================

\section{APPENDIX: AUXILIARY RESULTS AND PROOFS}

%=====================================================
\subsection{Proof of Lemma \ref{lem:sample}}
Note that $V_n$ is strongly consistent estimator of $\xi$. Therefore, for any fixed $c>0$,
\begin{align*}
P(N_c >\infty) &= \lim_{n\to \infty} P( N_c >n) \\
&= \lim_{n \to \infty} P\( n < \sqrt{A/c}\,(V_n + n^{-\gamma}) \) = 0.
\end{align*}
The last equality is obtained since $V_n \to \xi$ almost surely as $n \to \infty$. This completes the proof.
%Here, ${ V}_{n}^{{ 2}}$ is a strongly consistent
%estimators for ${ \xi }^{{ 2}}$. Thus, ${ V}_{n}^{2}%
%{ \rightarrow \xi }^{2}$ a.s.. The lemma follows from the fact that $%
%\sqrt{A/c}V_{n}$ converges to $n_{c}$ with probability 1.

\subsection{Lemmas to Prove The Main Result}

This section is dedicated to prove some lemmas that are essential to establish the main theorem \ref{thm:main}. First, we introduce few notations. Note from \eqref{stopping-rule} that $N_c \ge \sqrt{\frac{A}{c}} \, N_c^{-\gamma}$, i.e., $N_c \ge \(\frac{A}{c}\)^{\frac{1}{2(1+\gamma)}}$ with probability 1. For fixed $\epsilon, \, \gamma > 0$, define
\ba
n_{1c}=\(\frac{A}{c}\)^{\frac{1}{2(1+\gamma)}}, \quad n_{2c} = n_c(1-\epsilon), \,\, \text{ and } \,\, n_{3c} = n_c(1+\epsilon), \, \text{ where } \, n_c = \sqrt{\frac{A}{c}}\, \xi.
\label{defs}
\ea 
Suppose $\boldsymbol{X}_{(n)}$ denotes the $n$ dimensional vector of order statistics from the sample $X_1, \ldots, X_n$, and $\mathcal{F}_n$ is the $\sigma $-algebra generated by $( \boldsymbol{X}_{(n)}, X_{n+1}, X_{n+2}, \ldots).$ By Lee (1990), $\left\{ \overline{{ X}}_{n}{
,\mathcal{F} }_{n}\right\} $, $\left\{ { S}_{{ n}}^{{ 2}}%
{ ,\mathcal{F} }_{n}\right\} $, $\left\{ \text{$\widehat{{ \tau
}}_{n}{ ,\mathcal{F} }_{n}$}\right\} $, $\left\{ \widehat{{ %
\Delta }}_{n},{ \mathcal{F} }_{n}\right\} $, and their convex functions are all reverse submartingales. Using reverse submartingale properties of U-statistics, we prove the following maximal inequality for sample Gini's mean difference. 
%
%========== Lemma A1 ============================================
\begin{lemma} \label{A1}
If nonnegative i.i.d. random variables $X_1, \ldots, X_n$ are from the distribution $F$ such that $E(X_1^{\max{(2r,\, p)} }) < \infty$ for some positive integers $r$ and $p$, then for any $k>0$,
\begin{align*}
{ P}\left( \underset{n_{1c}\leq n\leq n_{2c}}{\max }\left\vert
\widehat{\Delta }_{n}^{2}-\Delta ^{2}\right\vert \geq k\right) { \leq
O(n}_{1c}^{-r/2}{ )+O(n}_{1c}^{-p/2}{ )} \,\, \text{ as } c\downarrow 0. 
\end{align*}
\end{lemma}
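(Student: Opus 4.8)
The plan is to pass from the squared U-statistic to the centered U-statistic $\widehat{\Delta}_n-\Delta$ and then exploit its reverse-submartingale structure through a Doob-type maximal inequality. First I would linearize using
\[
\widehat{\Delta}_n^2-\Delta^2=(\widehat{\Delta}_n-\Delta)^2+2\Delta(\widehat{\Delta}_n-\Delta),
\]
so that $|\widehat{\Delta}_n^2-\Delta^2|\le(\widehat{\Delta}_n-\Delta)^2+2\Delta\,|\widehat{\Delta}_n-\Delta|$. Since the maximum of a sum is at most the sum of the maxima, the event $\{\max_{n_{1c}\le n\le n_{2c}}|\widehat{\Delta}_n^2-\Delta^2|\ge k\}$ is contained in the union of $\{\max_n(\widehat{\Delta}_n-\Delta)^2\ge k/2\}$ and $\{\max_n|\widehat{\Delta}_n-\Delta|\ge k/(4\Delta)\}$, and a union bound reduces the problem to two maximal probabilities involving only $|\widehat{\Delta}_n-\Delta|$.

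For each of these I would invoke the reverse-submartingale property. Because $\{\widehat{\Delta}_n,\mathcal{F}_n\}$ is a reverse martingale and $x\mapsto|x-\Delta|^q$ is convex for $q\ge1$, the processes $\{|\widehat{\Delta}_n-\Delta|^{2r},\mathcal{F}_n\}$ and $\{|\widehat{\Delta}_n-\Delta|^{p},\mathcal{F}_n\}$ are reverse submartingales (this is the fact, recorded just before the lemma, that convex functions of $\widehat{\Delta}_n$ are reverse submartingales), with integrability guaranteed by the moment hypothesis. On the finite block $n_{1c}\le n\le n_{2c}$ the filtration decreases, so reversing the index $j\mapsto n_{2c}-j$ converts each into an ordinary nonnegative submartingale whose terminal element sits at $n=n_{1c}$. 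Doob's maximal inequality then gives, for $q\in\{2r,p\}$ and any $\lambda>0$,
\[
P\Big(\max_{n_{1c}\le n\le n_{2c}}|\widehat{\Delta}_n-\Delta|^{q}\ge\lambda\Big)\le\lambda^{-1}\,E\,|\widehat{\Delta}_{n_{1c}}-\Delta|^{q}.
\]
Taking $q=2r$, $\lambda=(k/2)^{r}$ for the first term and $q=p$, $\lambda=(k/4\Delta)^{p}$ for the second bounds the two probabilities by constant multiples of $E|\widehat{\Delta}_{n_{1c}}-\Delta|^{2r}$ and $E|\widehat{\Delta}_{n_{1c}}-\Delta|^{p}$ respectively.

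It then remains to control these endpoint moments. As $\widehat{\Delta}_n$ is a U-statistic of degree $2$ with kernel $|X_1-X_2|$ and $E|X_1-X_2|^{q}\le2^{q}E(X_1^{q})<\infty$ whenever $E(X_1^{q})<\infty$, the standard U-statistic moment inequality (obtained from the Hoeffding decomposition together with the Marcinkiewicz--Zygmund inequality, as in Sen 1981) yields $E|\widehat{\Delta}_n-\Delta|^{q}=O(n^{-q/2})$. With $q=2r$, finite under $E(X_1^{2r})<\infty$, this gives $E|\widehat{\Delta}_{n_{1c}}-\Delta|^{2r}=O(n_{1c}^{-r})$; with $q=p$, finite under $E(X_1^{p})<\infty$, it gives $E|\widehat{\Delta}_{n_{1c}}-\Delta|^{p}=O(n_{1c}^{-p/2})$. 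Both requirements are subsumed by $E(X_1^{\max(2r,p)})<\infty$. Substituting, the first probability is $O(n_{1c}^{-r})$, which is itself $O(n_{1c}^{-r/2})$, and the second is $O(n_{1c}^{-p/2})$; adding the two delivers the asserted bound as $c\downarrow0$.

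The step I expect to be the main obstacle is orienting the maximal inequality correctly. Since $\mathcal{F}_n$ shrinks as $n$ grows, the reversed process is a forward submartingale whose last (largest-index) term corresponds to the smallest sample size $n_{1c}$; because $E|\widehat{\Delta}_n-\Delta|^{q}$ is non-increasing in $n$, this is precisely the endpoint that furnishes a valid upper bound, and crucially one that still tends to $0$ since $n_{1c}\to\infty$ as $c\downarrow0$. A secondary technical point is establishing the $O(n^{-q/2})$ U-statistic moment rate under the bare moment hypothesis (for the possibly odd exponent $q=p$ one must use Marcinkiewicz--Zygmund rather than a variance computation, and if $p=1$ one falls back on Lyapunov's inequality), together with the harmless replacement of the non-integer $n_{1c},n_{2c}$ by their integer parts.
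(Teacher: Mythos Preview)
Your proof is correct and follows the same overall strategy as the paper (reduce to a maximal inequality for the reverse submartingale $|\widehat{\Delta}_n-\Delta|$, then invoke the Sen--Ghosh moment bound $E|\widehat{\Delta}_n-\Delta|^q=O(n^{-q/2})$), but the initial decomposition differs. The paper splits according to the sign of $\widehat{\Delta}_n-\Delta$, writing
\[
|\widehat{\Delta}_n^2-\Delta^2|\le(\widehat{\Delta}_n^2-\Delta^2)^{+}+2\Delta\,|\widehat{\Delta}_n-\Delta|\,I(\widehat{\Delta}_n<\Delta),
\]
applies the maximal inequality to the reverse submartingale $(\widehat{\Delta}_n^2-\Delta^2)^{+}$, and then needs a Cauchy--Schwarz step,
\[
E\big[\big((\widehat{\Delta}_{n_{1c}}^2-\Delta^2)^{+}\big)^r\big]\le\Big\{E(\widehat{\Delta}_{n_{1c}}-\Delta)^{2r}\,E(\widehat{\Delta}_{n_{1c}}+\Delta)^{2r}\Big\}^{1/2},
\]
to extract the $O(n_{1c}^{-r/2})$ rate. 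Your algebraic identity $\widehat{\Delta}_n^2-\Delta^2=(\widehat{\Delta}_n-\Delta)^2+2\Delta(\widehat{\Delta}_n-\Delta)$ is more direct: it bypasses the positive-part device and the Cauchy--Schwarz step entirely, and in fact yields the sharper order $O(n_{1c}^{-r})$ for the first piece (which you then relax to $O(n_{1c}^{-r/2})$). Both routes use the same moment hypothesis $E(X_1^{\max(2r,p)})<\infty$, so your argument is a mild simplification with no loss.
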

%--------------- proof starts--------------------
\begin{proof}
Note that
\begin{align}\label{A12}
\left\vert \widehat{\Delta }_{n}^{2}-\Delta ^{2}\right\vert  &  =
\left\vert \left( \widehat{\Delta }_{n}^{2}-\Delta ^{2}\right) I(\widehat{
\Delta }_{n}>\Delta )+\left( \widehat{\Delta }_{n}^{2}-\Delta ^{2}\right) I(
\widehat{\Delta }_{n}<\Delta )\right\vert  \notag
\\
& { \leq }\left( \widehat{\Delta }_{n}^{2}-\Delta ^{2}\right) ^{+}%
{ + \,2 \Delta }\left\vert \widehat{\Delta }_{n}-\Delta \right\vert
{ I(}\widehat{\Delta }_{n}{ <\Delta )}. 
\end{align}
Here, the notation $x^{+}$ is used to mean $\max(x, 0)$. Therefore,
\begin{align*}
P\left( \underset{n_{1c}\leq n\leq n_{2c}}{\max }\left\vert \widehat{\Delta }_{n}^{2}-\Delta ^{2}\right\vert \geq k\right)  
\leq  
P\left( \underset{n_{1c}\leq n\leq n_{2c}}{\max }\left( \widehat{%
\Delta }_{n}^{2}-\Delta ^{2}\right) ^{+}\geq \frac{k}{2}\right) +
P\left( \underset{n_{1c}\leq n\leq n_{2c}}{\max }\left\vert
\widehat{\Delta }_{n}-\Delta \right\vert \geq \frac{k}{4\Delta }\right).
\end{align*}
Since $\left( \widehat{\Delta }_{n}^{2}{ -\Delta }%
^{2}\right) ^{+}$ and $\left\vert \widehat{\Delta }_{n}{ -\Delta }%
\right\vert $ are reverse submartingales, using maximal
inequality for reverse submartingales (Ghosh et al. 1997), we write 
\begin{align}
P\( \max_{n_{1c} \le n \le n_{2c}} \left\vert \widehat{\Delta }_n^2 - \Delta ^2 \right\vert \ge k \) 
& \le
\(\frac{2}{k}\)^r E\left[\( \( \widehat{\Delta }_{n_{1c}}^2-\Delta ^{2}\) ^{+}
\)^r \right] + \(\frac{4\Delta}{k}\)^p  E \left[ \left\vert \widehat{\Delta }_{n_{1c}}-\Delta \right\vert^p \right] \notag 
\\
 & \le \(\frac{2}{k}\)^r \l E\left[\( \widehat{\Delta }_{n_{1c}}-\Delta \)^{2r}\right] E\left[\( \widehat{\Delta }_{n_{1c}}+\Delta \)^{2r}\right] \r^{\frac{1}{2}} + O\(n_{1c}^{\frac{-p}{2}}\) \notag
\\
& \le O\(n_{1c}^{-r/2}\) + O\(n_{1c}^{-p/2}\). \notag
\end{align}
The last two inequalities are obtained by Cauchy--Schwarz inequality and lemma 2.2 of Sen and Ghosh (1981). The moment conditions of this lemma are needed to ensure that all expectaions exist in the last three inequalities. 
\end{proof}
%===============End of A1=======================================
%If we take, $E(X^{8})<\infty ,$ then we can take $r=4$ and $p=8$. Therefore,
%
%\begin{align*}
%\begin{tabular}{l}
%${ P}\left( \underset{n_{1c}\leq n\leq n_{2c}}{\max }\left\vert
%\widehat{\Delta }_{n}^{2}-\Delta ^{2}\right\vert \geq k\right) { \leq O%
%}\left( n_{1c}^{-2}\right) { =O}\left( c^{1/(1+\gamma )}\right) $%
%\end{tabular}%
%\end{align*}
%================== Lemma A8 =================================
\begin{lemma}\label{A8}
Let $\overline{X}_n$ be the sample mean based on nonnegative i.i.d. observations $X_1, \ldots, X_n$. For $r \ge 1$, $E\( \overline{X}_n^{\,\, -r} \) \le E\( X_1^{\,-r}\)$.
\end{lemma}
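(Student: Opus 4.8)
The plan is to exploit convexity of the map $x \mapsto x^{-r}$ together with the fact that the sample mean is itself a conditional expectation of a single observation. Since $f(x) = x^{-r}$ has $f''(x) = r(r+1)x^{-r-2} > 0$ on $(0,\infty)$, the function $f$ is convex there for every $r \ge 1$ (indeed for every $r > 0$). If $E(X_1^{-r}) = \infty$ the claimed bound is trivial, so we may assume $E(X_1^{-r}) < \infty$ and, in particular, $X_1 > 0$ almost surely.

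First I would represent $\overline{X}_n$ as a conditional expectation. Let $\mathcal{S}$ denote the symmetric $\sigma$-algebra generated by the order statistics of $X_1, \ldots, X_n$ (equivalently, one may condition on the sum $S_n = \sum_{i=1}^n X_i$). Because the $X_i$ are i.i.d., they are exchangeable, so $E(X_i \mid \mathcal{S})$ does not depend on $i$; summing over $i$ and using that $S_n$ is $\mathcal{S}$-measurable gives $\sum_{i=1}^n E(X_i \mid \mathcal{S}) = S_n$, whence $E(X_1 \mid \mathcal{S}) = \overline{X}_n$.

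Next I would apply the conditional Jensen inequality to the convex function $f$, obtaining
\[
\overline{X}_n^{\,-r} = f\big(E(X_1 \mid \mathcal{S})\big) \le E\big(f(X_1) \mid \mathcal{S}\big) = E\big(X_1^{-r} \mid \mathcal{S}\big).
\]
Taking total expectations and invoking the tower property then yields $E(\overline{X}_n^{\,-r}) \le E(X_1^{-r})$, as desired. (Conceptually this says $\overline{X}_n \le_{cx} X_1$ in convex order, and $x^{-r}$ is just one convex test function.)

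The argument is short, and the only points requiring care are the standard measure-theoretic ones: checking that the conditional expectations are well defined (handled by the trivial case $E(X_1^{-r}) = \infty$ together with $X_1 > 0$ a.s.) and justifying the exchangeability identity $E(X_1 \mid \mathcal{S}) = \overline{X}_n$. I do not anticipate a genuine obstacle; the main conceptual step is recognizing that Jensen must be applied in its \emph{conditional} form, since a direct application to $f(\overline{X}_n)$ would only produce the reverse and useless bound $E(\overline{X}_n^{\,-r}) \ge \mu^{-r}$.
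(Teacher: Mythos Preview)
Your proof is correct and takes a genuinely different route from the paper's. The paper argues via the AM--GM inequality: from $\overline{X}_n \ge (\prod_i X_i)^{1/n}$ one gets $E(\overline{X}_n^{\,-r}) \le \{E(X_1^{-r/n})\}^n$ by independence, and then the monotonicity of $p \mapsto \{E(|Y|^p)\}^{1/p}$ (Lyapunov's inequality) turns this into $E(X_1^{-r})$. Your argument instead recognises $\overline{X}_n = E(X_1 \mid \mathcal{S})$ via exchangeability and applies conditional Jensen to the convex map $x \mapsto x^{-r}$. Your route is more conceptual and immediately yields the stronger statement $E[f(\overline{X}_n)] \le E[f(X_1)]$ for \emph{every} convex $f$ (i.e.\ $\overline{X}_n \le_{cx} X_1$), of which $f(x)=x^{-r}$ is a special case; the paper's AM--GM approach is more elementary in that it avoids conditional expectations and $\sigma$-algebras, but it is tailored to power functions. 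Both arguments in fact work for all $r>0$, so the hypothesis $r\ge 1$ is not sharp in either proof.
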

\begin{proof}
Note that $\overline{X}_n \ge \( \prod_{i=1}^n X_i \)^{1/n}$ as the observations are nonnegative. Therefore,
\begin{align}\label{A81}
E\( \overline{X}_n^{\,\, -r} \) \le E\left[ \( \prod_{i=1}^n \frac{1}{X_i}\)^{r/n} \right] = \l E\left[ \( \frac{1}{X_1}\)^{r/n}\right]\r^n.
\end{align} 
The last equality is due to the i.i.d. property of the observations. We know that $ \l E\( |X|^p\) \r^{1/p}$ is a nondecreasing function of $p$ for $p>0$. Applying this result with $p=1/n \ge 1$ in \eqref{A81}, we complete the proof.
\end{proof}
%=========== End of Lemma 8 ======================================
%================== Lemma A2 ==================================
\begin{lemma} \label{A2}
 Suppose that nonnegative i.i.d. random variables $X_1, \ldots, X_n$ are observed from the distribution $F$ such that $E(X_1)^{4p}$ and $E(X_1)^{-\max{\l 4p,\, 2p(r-1) \r}}$ exist for some positive integers $r$ and $p$. Then, for any $k>0$,
\begin{align*}
P\left( \underset{n_{1c}\leq n\leq n_{2c}}{\max }\left\vert \frac{1}{%
\overline{X}_{n}^{r}}-\frac{1}{\mu ^{r}}\right\vert \geq k\right) \le O(n_{1c}^{-p/2}) \,\, \text{ as } \, c \downarrow 0.
\end{align*}
\end{lemma}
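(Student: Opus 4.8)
The plan is to reuse the template of the proof of Lemma~\ref{A1}: decompose the quantity inside the maximum into a convex (hence reverse-submartingale) piece together with a piece that is dominated by a constant multiple of $|\overline{X}_n-\mu|$, and then apply the maximal inequality for reverse submartingales (Ghosh et al.\ 1997) to each piece separately. Writing $g(x)=x^{-r}$, which is convex and decreasing on $(0,\infty)$, I would begin from the elementary identity
\[
\left|\frac{1}{\overline{X}_n^{\,r}}-\frac{1}{\mu^{r}}\right|
=\left(\frac{1}{\overline{X}_n^{\,r}}-\frac{1}{\mu^{r}}\right)^{+}
+\left(\frac{1}{\mu^{r}}-\frac{1}{\overline{X}_n^{\,r}}\right)^{+}.
\]
The first summand is a convex function of $\overline{X}_n$, hence a reverse submartingale because $\{\overline{X}_n,\mathcal{F}_n\}$ is a reverse martingale. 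For the second summand, a mean value argument applied to $g$ on the event $\{\overline{X}_n>\mu\}$ gives the pointwise bound $\left(\mu^{-r}-\overline{X}_n^{-r}\right)^{+}\le\left(r/\mu^{r+1}\right)|\overline{X}_n-\mu|$, and $|\overline{X}_n-\mu|$ is itself a reverse submartingale. A union bound then yields
\[
P\!\left(\max_{n_{1c}\le n\le n_{2c}}\left|\frac{1}{\overline{X}_n^{\,r}}-\frac{1}{\mu^{r}}\right|\ge k\right)
\le P\!\left(\max_{n_{1c}\le n\le n_{2c}}\left(\frac{1}{\overline{X}_n^{\,r}}-\frac{1}{\mu^{r}}\right)^{+}\ge\frac{k}{2}\right)
+P\!\left(\max_{n_{1c}\le n\le n_{2c}}|\overline{X}_n-\mu|\ge\frac{k\mu^{r+1}}{2r}\right).
\]

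I would then dispatch the two probabilities by the reverse-submartingale maximal inequality with exponent $p$, which bounds each maximum by the corresponding expectation evaluated at the left endpoint $n_{1c}$, where the $\sigma$-field $\mathcal{F}_{n_{1c}}$ is richest (exactly as in Lemma~\ref{A1}). The second term is immediate: it is at most $\left(2r/(k\mu^{r+1})\right)^{p}E|\overline{X}_{n_{1c}}-\mu|^{p}=O(n_{1c}^{-p/2})$ by lemma~2.2 of Sen and Ghosh (1981) applied to the sample mean, using $E(X_1^{4p})<\infty$. The first term is bounded by $(2/k)^{p}\,E\bigl|\overline{X}_{n_{1c}}^{-r}-\mu^{-r}\bigr|^{p}$, so the whole estimate reduces to showing $E\bigl|\overline{X}_{n}^{-r}-\mu^{-r}\bigr|^{p}=O(n^{-p/2})$.

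For this last and main step I would factor
\[
\frac{1}{\overline{X}_n^{\,r}}-\frac{1}{\mu^{r}}
=\left(\frac{1}{\overline{X}_n}-\frac{1}{\mu}\right)\Lambda_n,
\qquad
\Lambda_n:=\sum_{j=0}^{r-1}\overline{X}_n^{-(r-1-j)}\mu^{-j},
\]
apply Cauchy--Schwarz to split off $\Lambda_n$, and then a second Cauchy--Schwarz to $\left|\overline{X}_n^{-1}-\mu^{-1}\right|^{2p}=|\mu-\overline{X}_n|^{2p}/(\mu^{2p}\overline{X}_n^{2p})$ so as to separate $(\mu-\overline{X}_n)^{4p}$ from $\overline{X}_n^{-4p}$. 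The positive-moment factor $E(\mu-\overline{X}_n)^{4p}=O(n^{-2p})$ (lemma~2.2 of Sen and Ghosh, requiring $E(X_1^{4p})<\infty$) supplies the decay rate, whereas $E(\overline{X}_n^{-4p})$ and $E(\Lambda_n^{2p})$ are finite and bounded in $n$ by Lemma~\ref{A8}; since $\Lambda_n$ carries negative powers of $\overline{X}_n$ only up to order $r-1$, boundedness of $E(\Lambda_n^{2p})$ requires exactly $E(X_1^{-2p(r-1)})<\infty$. The hard part is precisely this exponent bookkeeping: the two Cauchy--Schwarz steps are what force the negative-moment hypothesis to be $E(X_1^{-\max\{4p,\,2p(r-1)\}})<\infty$ and no stronger, the $4p$ arising from the inner split and the $2p(r-1)$ from $\Lambda_n$, while $E(X_1^{4p})<\infty$ must simultaneously be strong enough to yield the $O(n^{-p/2})$ rate. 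Assembling the two pieces gives the claimed $O(n_{1c}^{-p/2})$ bound.
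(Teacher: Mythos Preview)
Your proof is correct and follows essentially the same strategy as the paper: split off the positive part $(\overline{X}_n^{-r}-\mu^{-r})^{+}$ as a reverse submartingale, control the complementary piece by a constant multiple of $|\overline{X}_n-\mu|$, apply the reverse-submartingale maximal inequality to reduce to $n=n_{1c}$, and then handle $E\bigl[\bigl((\overline{X}_{n_{1c}}^{-r}-\mu^{-r})^{+}\bigr)^{p}\bigr]$ by factoring out $(\overline{X}_{n_{1c}}^{-1}-\mu^{-1})$ and applying Cauchy--Schwarz twice, which is exactly how the paper's bound \eqref{A23} is obtained and is what forces the moment condition $E(X_1^{-\max\{4p,\,2p(r-1)\}})<\infty$.

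The one genuine difference is that the paper bounds the piece on $\{\overline{X}_n>\mu\}$ via a second-order Taylor expansion, producing an additional quadratic term $U_{3n}=\tfrac{r(r+1)}{2\mu^{r+2}}(\overline{X}_n-\mu)^2$ that must be handled separately in \eqref{A25}. Your mean-value bound $(\mu^{-r}-\overline{X}_n^{-r})^{+}\le (r/\mu^{r+1})|\overline{X}_n-\mu|$ is already sharp on that event (since $g'(x)=-rx^{-(r+1)}$ is increasing for $x\ge\mu$), so you legitimately avoid $U_{3n}$ altogether; this is a small but clean simplification, and it neither weakens the conclusion nor changes the moment hypotheses.
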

%------------------ proof starts here -------------------------------------------------------
\begin{proof} By Taylor expansion of $\overline{X}_n^{-r} = \frac{1}{\mu^r}\( 1+ (\overline{X}_n -\mu)/\mu\)^{-r}$, we have
\begin{align*}
\left\vert \( \frac{1}{\overline{X}_n^r}  - \frac{1}{\mu^r}\) I\( \frac{1}{\overline{X}_n} < \frac{1}{\mu}\)  \right\vert 
= \frac{1}{\mu^r} \left\vert \l -\frac{r}{\mu}(\overline{X}_n - \mu) + \frac{r(r+1)}{2\mu^2} \frac{(\overline{X}_n - \mu) ^2}{ z^{r+2}} \r   I\( \frac{1}{\overline{X}_n} < \frac{1}{\mu}\) \right\vert,
\end{align*}
where $z \in [ 1, \overline{X}_n / \mu]$. Since $z^{-(r+2)}I(\overline{X}_n^{-1} < \mu^{-1}) \le 1$, proceeding along the lines of \eqref{A12}
\begin{align}\label{A21}
\left\vert  \frac{1}{\overline{X}_n^r}  - \frac{1}{\mu^r} \right\vert & = \left\vert \( \frac{1}{\overline{X}_n^r}  - \frac{1}{\mu^r}\) I\( \frac{1}{\overline{X}_n} \ge \frac{1}{\mu}\) + \( \frac{1}{\overline{X}_n^r}  - \frac{1}{\mu^r}\) I\( \frac{1}{\overline{X}_n} < \frac{1}{\mu}\)  \right\vert  \notag
\\
& \le  \( \frac{1}{\overline{X}_n^r}  - \frac{1}{\mu^r}\)^{+} +  \frac{r}{\mu^{r+1}} \left\vert \overline{X}_n - \mu \right\vert  + \frac{r(r+1)}{2\mu^{r+2}} (\overline{X}_n - \mu) ^2.  
\end{align}
Let $U_{1n} =\( \frac{1}{\overline{X}_n^r}  - \frac{1}{\mu^r}\)^{+}$, $U_{2n}=  \frac{r}{\mu^{r+1}} \left\vert \overline{X}_n - \mu \right\vert$, and $U_{3n} =  \frac{r(r+1)}{2\mu^{r+2}} (\overline{X}_n - \mu) ^2$.  Using \eqref{A21}, we can write
\begin{align}\label{A22}
P\left( \underset{n_{1c}\leq n\leq n_{2c}}{\max }\left\vert \frac{1}{\overline{X}_{n}^{r}}-\frac{1}{\mu ^{r}}\right\vert \geq k\right)  
 & \le P\left( \underset{n_{1c}\leq n\leq n_{2c}}{\max } U_{1n}\ge \frac{k}{3}\right)  +  P\left( \underset{n_{1c}\leq n\leq n_{2c}}{\max } U_{2n}\ge \frac{k}{3}\right) \notag \\
&  +  P\left( \underset{n_{1c}\leq n\leq n_{2c}}{\max } U_{3n}\ge \frac{k}{3}\right). 
\end{align}
Since $\( \frac{1}{\overline{X}_n^r}  - \frac{1}{\mu^r}\)$ is a reverse submartingale and $f(x)=x^{+}$ is a non-decreasing convex function of $x$, $U_{1n}$ is a reverse submartingale. Therefore, using maximal inequality for reverse submartingales
\begin{align}\label{A23}
 P\left( \max_{n_{1c} \le n \le n_{2c}} U_{1n}\ge \frac{k}{3}\right)
 & \le \left( \frac{3}{k}\right) ^p E\left[ \left( \frac{1}{
\overline{X}_{n_{1c}}^{r}}-\frac{1}{\mu ^{r}}\right) ^{+}\right] ^{p} \notag
\\
 &\le \left( \frac{3}{k}\right) ^p E\left[ \left( \frac{1}{
\overline{X}_{n_{1c}}}-\frac{1}{\mu }\right) \left( \frac{1}{\overline{X}
_{n_{1c}}^{r-1}}+\frac{1}{\mu \overline{X}_{n_{1c}}^{r-2}}+...+\frac{1}{\mu
^{r-1}}\right) I\(\overline{X}_{n_{1c}}  < \mu \)\right] ^p \notag
\\
 & \le \left( \frac{3}{k}\right) ^{p}{ r}^{p} E\left[
\left( \frac{1}{\overline{X}_{n_{1c}}}-\frac{1}{\mu }\right) ^{p}\overline{X}%
_{n_{1c}}^{-p(r-1)}\right] \notag
\\
%& \le  \(\frac{3r}{k}\)^p \l  E\left( \frac{1}{\overline{X}_{n_{1c}}}-\frac{1}{\mu }\right)^{2p} 
%E\( \overline{X}_{n_{1c}}\)^{-2p(r-1)} \r^{\frac{1}{2}} \notag
%\\
& \le  \(\frac{3r}{k}\)^p \l E\left[\( \overline{X}_{n_{1c}} - \mu \)^{4p}\right] E \left[ \( \frac{1}{\mu \overline{X}_{n_{1c}} } \)^{ 4p} \right] \r^{\frac{1}{4}} \l E\( \frac{1}{\overline{X}_{n_{1c}}^{2p(r-1)} } \) \r^{\frac{1}{2}}  \notag
\\
 & \le O(n_{1c}^{-p/2}).
\end{align}
The last two inequalities are obtained by using Cauchy--Schwarz inequality and lemma 2.2 of Sen and Ghosh (1981). Due to lemma \ref{A8}, existence of  $E(X_1)^{-\max{\l 4p,\, 2p(r-1) \r}}$ ensures the existence of $ E\left[\( \frac{1}{ \overline{X}_{n_{1c}} } \)^{ 4p}\right]$ and $ E\left[\( \frac{1}{\overline{X}_{n_{1c}} } \)^{ 2p(r-1)}\right]$. Since $\lvert \overline{X}_n - \mu \rvert$ and $(\overline{X}_n - \mu)^2$ are reverse submartingales, we can write
\begin{align}\label{A24}
 P\left( \underset{n_{1c}\leq n\leq n_{2c}}{\max } U_{2n} \geq \frac{k}{3}\right) 
 & \le \left( \frac{3r}{k\mu ^{r+1}}\right) ^{2p} E\left(
\overline{X}_{n_{1c}}-\mu \right) ^{2p} \le O(n_{1c}^{-p}), \\
%\end{align}
%%
%\begin{align}\label{A25}
 P\left( \underset{n_{1c}\leq n\leq n_{2c}}{\max } U_{3n}\ge \frac{k}{3}\right) 
& \le \left( \frac{3r(r+1)}{2k\mu ^{r+2}}\right)^p E\left(
\overline{X}_{n_{1c}}-\mu \right) ^{2p} \le O(n_{1c}^{-p}). \label{A25}
\end{align}
Apply \eqref{A23}, \eqref{A24}, and \eqref{A25} in \eqref{A22} to complete the proof. 
\end{proof}
%======== End of proof A2 =================================
%
%======== Lemma A3 ===================================
\begin{lemma}\label{A3}
 Suppose that nonnegative i.i.d. observations $X_1, \ldots, X_n$ are such that $E(X_1^{4r})$ and $E(X_1^{-6r})$ exist for some $r\ge 1$. For any $\epsilon \in (0, 1)$ and $\gamma >0$, 
\begin{itemize}
\item[(i)] $P(N_{c}\le n_{c}(1-\epsilon ))= O\(n_{1c}^{ -\frac{r}{2} }\) = O\left( c^{\frac{r}{4(1+\gamma)}}\right)\,$ as $\, c \downarrow 0$,
\item[(ii)] $P(N_{c}\ge n_{c}(1+\epsilon ))= O\(n_{1c}^{ -\frac{r}{2} }\) = O\left( c^{\frac{r}{4(1+\gamma)}}\right)\,$ as $\,c \downarrow 0$.
\end{itemize}
\end{lemma}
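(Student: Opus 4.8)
The plan is to translate each tail event for $N_c$ into a statement about how far the estimator $V_n^2$ in \eqref{est-of-xi2} strays from its target $\xi^2$ in \eqref{def:xi}, and then to control that deviation using the maximal inequalities already established for the U-statistic building blocks of $V_n^2$. Part (i) will need a maximal inequality over the range $[n_{1c}, n_{2c}]$, whereas part (ii) will reduce to a single-point moment bound. For part (i), recall from \eqref{defs} that $N_c \ge n_{1c}$ with probability one, so on the event $\{N_c \le n_{2c}\}$ the boundary in \eqref{stopping-rule} is first crossed at some index $n \in [n_{1c}, n_{2c}]$. At that index $n \ge \sqrt{A/c}\,V_n$, whence $V_n \le n\sqrt{c/A} \le n_{2c}\sqrt{c/A} = \xi(1-\epsilon)$, and therefore $\xi^2 - V_n^2 \ge \xi^2\epsilon(2-\epsilon) =: \delta_1 > 0$. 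Consequently
\begin{align*}
P\( N_c \le n_{2c}\) \le P\left( \max_{n_{1c}\le n \le n_{2c}} \left\vert V_n^2 - \xi^2 \right\vert \ge \delta_1 \right).
\end{align*}

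I would then expand $V_n^2 - \xi^2$ using \eqref{est-of-xi2} and \eqref{def:xi} into a finite sum of terms, each the product of a moment-controlled factor and exactly one elementary deviation, obtained by repeated adding and subtracting; the deviations that appear are $\widehat{\Delta}_n^2 - \Delta^2$, $\overline{X}_n^{-j} - \mu^{-j}$ for $j \in \{2,3,4\}$, $S_n^2 - \sigma^2$, $\widehat{\tau}_n - \tau$, and $s_{wn}^2 - 4\sigma_1^2$. Splitting the maximal probability by the triangle inequality and isolating the moment-controlled factor with a Cauchy--Schwarz split, each summand is then bounded by Lemma \ref{A1} (for the $\widehat{\Delta}_n^2$ deviation), Lemma \ref{A2} (for the reciprocal-mean deviations), and the analogous reverse-submartingale maximal inequalities for $S_n^2$, $\widehat{\tau}_n$, and $s_{wn}^2$ built from the general maximal inequality of Ghosh et al. (1997) together with Lemma 2.2 of Sen and Ghosh (1981). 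The moment hypotheses permit taking the internal exponents of these lemmas at least as large as $r$, so each summand is $O(n_{1c}^{-r/2})$, and the aggregate is $O(n_{1c}^{-r/2}) = O\(c^{r/(4(1+\gamma))}\)$, as claimed.

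For part (ii), a single-point reduction suffices. Since $N_c$ is the smallest integer satisfying \eqref{stopping-rule}, the event $\{N_c \ge n_{3c}\}$ forces the boundary to be uncrossed at $n^\ast := \lceil n_{3c}\rceil - 1$, that is $n^\ast < \sqrt{A/c}\(V_{n^\ast} + (n^\ast)^{-\gamma}\)$, so that $V_{n^\ast} > n^\ast\sqrt{c/A} - (n^\ast)^{-\gamma}$. Because $n^\ast\sqrt{c/A} \to \xi(1+\epsilon)$ and $(n^\ast)^{-\gamma} \to 0$ as $c\downarrow 0$, for all small $c$ this yields $V_{n^\ast}^2 - \xi^2 \ge \delta_2$ for some constant $\delta_2 > 0$. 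Applying Markov's inequality with exponent $r$,
\begin{align*}
P\( N_c \ge n_{3c}\) \le \delta_2^{-r}\, E\left\vert V_{n^\ast}^2 - \xi^2 \right\vert^{r} = O\((n^\ast)^{-r/2}\) = O\(n_{1c}^{-r/2}\),
\end{align*}
where the middle equality comes from the same decomposition as in part (i), now evaluated at the single index $n^\ast$, and the last equality holds because $n^\ast \asymp n_{3c} \gg n_{1c}$ for small $c$.

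The main obstacle is the decomposition and moment accounting behind part (i). The estimator $V_n^2$ is a sum of ratios involving products of up to three U-statistics over powers of $\overline{X}_n$ (for instance $\widehat{\Delta}_n^2 S_n^2 / 4\overline{X}_n^4$), so isolating a single deviation in each term requires careful add-and-subtract manipulations, after which the leftover factors must be bounded in $L^q$ via Cauchy--Schwarz and Lemma \ref{A8} for the negative moments of $\overline{X}_n$. The delicate point is to verify that the hypotheses $E(X_1^{4r}) < \infty$ and $E(X_1^{-6r}) < \infty$ are exactly strong enough for every invocation of Lemmas \ref{A1}, \ref{A2}, \ref{A8} and every Cauchy--Schwarz split to be legitimate; the binding constraint comes from the highest-order term carrying $\overline{X}_n^{-4}$, which, once squared inside Cauchy--Schwarz, demands negative moments up to order $6r$.
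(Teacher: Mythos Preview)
Your reduction for part (i) to a maximal deviation of $V_n^2$ from $\xi^2$ over $[n_{1c},n_{2c}]$ is exactly what the paper does. However, your handling of the decomposition differs from the paper's in a way that matters, and there is one genuine gap.

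\textbf{Decomposition.} You propose to write $V_n^2-\xi^2$ as a sum of terms each carrying \emph{one} deviation times a random ``moment-controlled'' factor (e.g.\ $(\widehat{\Delta}_n^2-\Delta^2)\,S_n^2/(4\overline{X}_n^4)$), and then to separate factor from deviation via Cauchy--Schwarz after a Markov step. The paper instead expands each piece of $V_n^2-\xi^2$ fully into products of \emph{pure} deviations with \emph{constant} coefficients (for instance $V_{1n}=T_{1n}T_{2n}T_{3n}+\Delta^2T_{2n}T_{3n}+\cdots$, with $T_{1n}=\widehat{\Delta}_n^2-\Delta^2$, $T_{2n}=S_n^2-\sigma^2$, $T_{3n}=(4\overline{X}_n^4)^{-1}-(4\mu^4)^{-1}$), and then bounds each product via the elementary split $P(\max|\prod_iT_{in}|\ge k)\le\sum_iP(\max|T_{in}|\ge k^{1/m})$. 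The paper's route applies Lemmas \ref{A1}--\ref{A2} and the reverse-submartingale maximal inequality directly to each $T_{in}$ with exponent $r$; your Cauchy--Schwarz route forces the deviation moments up to order $2r$ before you can recover the target rate $O(n_{1c}^{-r/2})$, and you have not checked that the stated hypotheses $E(X_1^{4r})$, $E(X_1^{-6r})$ still suffice at that higher order. The paper's decomposition is what makes the moment bookkeeping land exactly on those exponents.

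\textbf{The genuine gap.} You treat $s_{wn}^2$ as if it were just another U-statistic subject to the same reverse-submartingale maximal inequality. It is not: $s_{wn}^2$ is a jackknife-type variance estimator built from the pseudovalues $W_{jn}$, and the paper explicitly flags that the arguments used for $P_1$--$P_3$ ``may not work without additional result.'' The needed maximal inequality $P(\max_{n_{1c}\le n\le n_{2c}}|s_{wn}^2/4-\sigma_1^2|\ge K)=O(n_{1c}^{-r})$ is obtained by following the proof of Lemma~3.1 of Sen and Ghosh (1981), not by the generic U-statistic machinery. You need to invoke that result (or reproduce its argument) separately for the $V_{4n}$ term.

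\textbf{Part (ii).} Your single-point reduction at $n^\ast=\lceil n_{3c}\rceil-1$ is a legitimate simplification; the paper merely says the proof is ``very similar to the proof of part (i),'' which presumably means a maximal inequality over $n\ge n_{3c}$. Your approach is cleaner here, but note that the moment bound $E|V_{n^\ast}^2-\xi^2|^r=O((n^\ast)^{-r/2})$ still requires the same special treatment of $s_{wn^\ast}^2$ as in part (i).
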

%-------------- proof of lemma A3 starts -----------------------------------------------
\begin{proof}
Using the definition of stopping rule $N_c$ in \eqref{stopping-rule} and \eqref{defs}, we have 
\begin{align}\label{A31}
 P( N_c \le n_{2c} ) & \le P\( n > \sqrt{\frac{A}{c}}V_n \text{ for some } n\in [n_{1c}, n_{2c}] \) \notag
\\
& \le P\( V_n^2 \le \(\frac{c}{A}\) n_{2c}^2 \text{ for some } n\in [n_{1c}, n_{2c}] \) \notag
\\
& \le P\( \lvert V_n^2 - \xi^2\rvert \ge \xi^2\epsilon(2-\epsilon) \text{ for some } n\in [n_{1c}, n_{2c}] \)  \notag
\\
& \le P\( \max_{ n_{1c} \le n \le n_{2c} } \l \abs{V_{1n}} + \abs{V_{2n}} + \abs{V_{3n}} + \abs{V_{4n}} \r \ge \xi^2\epsilon(2-\epsilon) \),
\end{align}
where $V_{1n} =\( \frac{ \widehat{\Delta}_n^2 }{ 4 \overline{X}_n^4 } S_n^2 - \frac{ \Delta^2 }{ 4 \mu^4 }\sigma^2 \)$, $V_{2n} =\( \frac{ \widehat{\Delta}_n }{ \overline{X}_n^3 } \widehat{\tau}_n - \frac{ \Delta}{ \mu^3 }\tau \)$, $V_{3n} =\( \frac{ \widehat{\Delta}_n^2 }{ \overline{X}_n^2 } - \frac{ \Delta^2 }{ \mu^2 }\)$, and $V_{4n} =\( \frac{ s_{wn}^2 }{ 4 \overline{X}_n^2 } - \frac{ \sigma_1^2 }{ \mu^2 }\)$. Let $k=\xi^2\epsilon(2-\epsilon)$. Then, \eqref{A31} can be written as $P(N_c \le n_{2c} ) \le P_1 + P_2 + P_3 + P_4$, where
\begin{align*}
P_i = P\(  \max_{ n_{1c} \le n \le n_{2c} } \lvert V_{in} \rvert \ge \frac{k}{4}  \), \quad \text{for }\, i=1, 2, 3, 4.
\end{align*}
First, let us find an upper bound of $P_1$. Let $T_{1n}=  \( \widehat{\Delta}_n^2 - \Delta^2 \)$, $T_{2n} = \( S_n^2 - \sigma^2 \)$, and $T_{3n} = \( \frac{1}{ 4\overline{X}_n^4 } - \frac{1}{4\mu^4}\)$. Note that 
\begin{align}\label{A32}
V_{1n} = T_{1n}T_{2n}T_{3n} + \Delta^2 T_{2n}T_{3n} + \sigma^2 T_{1n}T_{3n} + \frac{1}{\mu^4}T_{1n}T_{2n}  + \frac{\sigma^2}{\mu^4} T_{1n} + \frac{\Delta^2}{\mu^4} T_{2n} + \Delta^2 \sigma^2 T_{3n}, 
\end{align}
Let us consider the first term in the summation of \eqref{A32} and state the following inequalities.
\begin{align}\label{A33} 
P\(   \max_{ n_{1c} \le n \le n_{2c} } \lvert T_{1n}T_{2n}T_{3n} \rvert \ge \frac{k}{28} \) 
& \le \sum_{i=1}^3 P\( \max_{ n_{1c} \le n \le n_{2c} } \lvert T_{in} \rvert \ge \(\frac{k}{28}\)^{\frac{1}{3}} \) \notag
\\
& \le O( n_{1c}^{-r}) + O( n_{1c}^{-r}) + O( n_{1c}^{-r/2}) = O( n_{1c}^{-r/2}). 
\end{align}
The asymptotic orders in \eqref{A33} are obtained by using lemma \ref{A1}, maximal inequality for reverse martingales (Lee, p. 112, 1990), lemma 2.2 of Sen and and Ghosh (1981), and lemma \ref{A2}. The conditions of  lemma \ref{A3} are also used in \eqref{A33}. Following the same argument as above, one can show that the aymptotic order of probability of large deviations (as in \eqref{A33}) corresponding to the remaining six terms in the summation of \eqref{A32} are either $O( n_{1c}^{-r})$ or $O( n_{1c}^{-r/2})$. Therefore,  
\begin{align}\label{A34}
P_1 = P\(  \max_{ n_{1c} \le n \le n_{2c} } \lvert V_{1n} \rvert \ge \frac{k}{4}  \) \le O( n_{1c}^{-r/2}). 
\end{align}
Note that all the estimators in $V_{2n}$ and $V_{3n}$ are U-statistics as we had in the case of $V_{1n}$. So, following similar arguments as in the proof of \eqref{A34}, one can show that both $P_2$ and $P_3$ are $O( n_{1c}^{-r/2})$ as $c \downarrow 0$.  

To work with $P_4$, we note that the expression of $V_{4n}$ involves $s_{wn}^2$ which is not a U-statistics. Therefore, arguments given in the case of $P_1$-$P_3$ may not work without additional result. Following the proof of lemma 3.1 of Sen and Ghosh (1981) and noting that $E\( X_1^{4r}\) < \infty$ for $r \ge 1$,
\begin{align}\label{A35}
P\( \max_{ n_{1c} \le n \le n_{2c} } \left\vert  \frac{s_{wn}^2}{4} -\sigma_1 ^2 \right\vert \ge K \) \le  O( n_{1c}^{-r}), \,\, \text{ for any positive constant $K$}. 
\end{align}
Noting that $V_{4n} = W_{1n}W_{2n} + \sigma_1^2 W_{2n} + \mu^{-2} W_{1n}$, where $W_{1n} =\( \frac{s_{wn}^2}{4} -\sigma_1 ^2\)$ and $W_{2n} =\( \frac{1}{\overline{X}_n^2} -\frac{1}{\mu^2}\)$, 
 \begin{align}\label{A36}
P_4 & \le P\( \max_{ n_{1c} \le n \le n_{2c} } \lvert W_{1n}W_{2n} \rvert \ge \frac{k}{12} \) + P\( \max_{ n_{1c} \le n \le n_{2c} } \lvert W_{2n} \rvert \ge \frac{k}{12\sigma_1^2} \) + P\( \max_{ n_{1c} \le n \le n_{2c} } \lvert W_{1n} \rvert \ge \frac{k\mu^2}{12} \) \notag
\\
& \le \sum_{i=1}^2 P\( \max_{ n_{1c} \le n \le n_{2c} } \lvert W_{in} \rvert \ge \sqrt{\frac{k}{12}} \) + O( n_{1c}^{-r/2}) + O( n_{1c}^{-r}) \le O( n_{1c}^{-r/2}).
\end{align}
The asymptotic orders in \eqref{A36} are obtained by using lemma \ref{A2} and the inequality in \eqref{A35}. We complete the proof of (i) by adding all the upper bounds for $P_1$-$P_4$ and noting that $n_{1c} = O\( c^{-1/(2+2\gamma)}\)$. The proof for part (ii) of lemma \ref{A3} is very similar to the proof of part (i).  
\end{proof}
%
%========= end of lemma A3 ====================
%
%============== Lemma A4 =====================================
\begin{lemma}\label{A4}
 If nonnegative i.i.d. observations $X_1, \ldots, X_n$ are such that $E(X_1^{4r})$ and $E(X_1^{-4r\alpha})$ exist for some $r\ge 1$ and $\alpha >1$, then
\[ E\(\max_{ n_{1c} \le n \le n_{2c} } (G_n - G_F)^r \) =  O( n_{1c}^{-r/2}) \,\, \text{ as } c\downarrow 0. \]
\end{lemma}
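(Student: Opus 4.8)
The plan is to reduce the statement to moment bounds on the maxima of two elementary reverse submartingales, namely the Gini mean difference $\widehat{\Delta}_n$ and the reciprocal sample mean $1/\overline{X}_n$, and then to recombine them. Since $(G_n - G_F)^r \le |G_n - G_F|^r \le (\max_{n_{1c}\le n\le n_{2c}}|G_n - G_F|)^r$, it suffices to bound $E[(\max_{n_{1c}\le n\le n_{2c}}|G_n - G_F|)^r]$. Writing $T_{1n} = \widehat{\Delta}_n - \Delta$ and $T_{2n} = \overline{X}_n^{-1} - \mu^{-1}$, a direct expansion of $G_n - G_F = \tfrac12(\widehat{\Delta}_n/\overline{X}_n - \Delta/\mu)$ gives
\[
G_n - G_F = \tfrac12\Big( \Delta\, T_{2n} + \tfrac{1}{\mu}\, T_{1n} + T_{1n} T_{2n} \Big),
\]
so that, setting $M_1 = \max_{n_{1c}\le n\le n_{2c}}|T_{1n}|$ and $M_2 = \max_{n_{1c}\le n\le n_{2c}}|T_{2n}|$ and using $\max_n|T_{1n}T_{2n}| \le M_1 M_2$ together with the triangle inequality and the elementary bound $(a+b+c)^r\le 3^{r-1}(a^r+b^r+c^r)$,
\[
E\Big[\big(\max_{n_{1c}\le n\le n_{2c}}|G_n - G_F|\big)^r\Big] \le C\Big( \Delta^r E[M_2^r] + \tfrac{1}{\mu^r}E[M_1^r] + E[M_1^r M_2^r]\Big),
\]
with $C$ depending only on $r$, and $E[M_1^r M_2^r] \le (E[M_1^{2r}])^{1/2}(E[M_2^{2r}])^{1/2}$ by Cauchy--Schwarz. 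Everything therefore reduces to showing $E[M_i^r] = O(n_{1c}^{-r/2})$ and $E[M_i^{2r}] = O(n_{1c}^{-r})$ for $i=1,2$; the cross term is then $O(n_{1c}^{-r})$, which is negligible against the target $O(n_{1c}^{-r/2})$.

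For $M_1$ I would use that $\widehat{\Delta}_n - \Delta$ is a reverse martingale, so $|\widehat{\Delta}_n - \Delta|$ is a nonnegative reverse submartingale; the $L^p$ maximal inequality for reverse submartingales bounds $E[M_1^p]$ by a constant times $E[|\widehat{\Delta}_{n_{1c}} - \Delta|^p]$ at the smallest index $n_{1c}$, and lemma 2.2 of Sen and Ghosh (1981) gives $E[|\widehat{\Delta}_{n_{1c}} - \Delta|^p] = O(n_{1c}^{-p/2})$ for $p=r$ and $p=2r$, needing only $E(X_1^{2r})<\infty$. For $M_2$ the reciprocal mean is not a U-statistic, so I would import the device of Lemma~\ref{A2}: the Taylor/positive-part bound
\[
\Big|\tfrac{1}{\overline{X}_n} - \tfrac{1}{\mu}\Big| \le \Big(\tfrac{1}{\overline{X}_n} - \tfrac{1}{\mu}\Big)^{+} + \tfrac{1}{\mu^2}|\overline{X}_n - \mu| + \tfrac{1}{\mu^3}(\overline{X}_n - \mu)^2,
\]
each summand of which is a nonnegative reverse submartingale, being a convex (and, where required, nondecreasing) function of the reverse martingale $\overline{X}_n$. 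Applying the maximal inequality term by term reduces $E[M_2^p]$ to the index-$n_{1c}$ moments $E[((\overline{X}_{n_{1c}}^{-1} - \mu^{-1})^{+})^p]$, $E|\overline{X}_{n_{1c}} - \mu|^p$, and $E(\overline{X}_{n_{1c}} - \mu)^{2p}$, each of which is $O(n_{1c}^{-p/2})$ or smaller: the mean moments by lemma 2.2 of Sen and Ghosh, and the reciprocal term exactly as in \eqref{A23}, via Cauchy--Schwarz combined with Lemma~\ref{A8}.

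The delicate part is the reciprocal term in $M_2$, since it is where the negative-moment hypothesis enters: the factor $\alpha>1$ in $E(X_1^{-4r\alpha})$ supplies the H\"older room needed to split the expectation as in \eqref{A23} and to invoke Lemma~\ref{A8}, guaranteeing that $E(\overline{X}_{n_{1c}}^{-4r})$ and the accompanying negative moments are finite. I expect the main obstacle to be essentially bookkeeping, namely verifying that the exponents generated by the two applications of the maximal inequality (at powers $r$ and $2r$) and by Cauchy--Schwarz remain within the moment budget $E(X_1^{4r})<\infty$, $E(X_1^{-4r\alpha})<\infty$ (the positive exponent $4r$ being forced by the term $E(\overline{X}_{n_{1c}}-\mu)^{4r}$ arising from the squared-mean summand), and that the $p=r=1$ edge case of the maximal inequality is accommodated, for instance by passing to a slightly larger exponent $r'>1$ for which the moments are available. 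Once the four bounds $E[M_i^r]=O(n_{1c}^{-r/2})$ and $E[M_i^{2r}]=O(n_{1c}^{-r})$ are established, collecting the three terms of the displayed inequality yields $E[(\max_{n_{1c}\le n\le n_{2c}}|G_n-G_F|)^r]=O(n_{1c}^{-r/2})$, which completes the proof.
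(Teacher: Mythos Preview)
Your proposal is correct and follows the same overall strategy as the paper: decompose $G_n - G_F$ into pieces driven by $\widehat{\Delta}_n - \Delta$ and $\overline{X}_n^{-1} - \mu^{-1}$, apply a $C_r$ inequality and Cauchy--Schwarz, and control the maxima via reverse-(sub)martingale maximal inequalities together with the central-moment bound of Sen and Ghosh. The one substantive difference is the treatment of the reciprocal term. The paper does not route through the three-term bound of Lemma~\ref{A2}; it writes $(\overline{X}_n^{-1}-\mu^{-1})^{2r} = (\overline{X}_n-\mu)^{2r}/(\mu\overline{X}_n)^{2r}$, uses $\max_n(a_nb_n)\le(\max_n a_n)(\max_n b_n)$ and Cauchy--Schwarz, and then bounds $E\bigl[\max_n \overline{X}_n^{-4r}\bigr]$ directly via the layer-cake device \eqref{A44}. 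It is precisely this last step that consumes the strict inequality $\alpha>1$ (to make $\int_1^\infty t^{-\alpha}\,dt$ finite), whereas your route through \eqref{A23} with a single Cauchy--Schwarz actually succeeds with $\alpha=1$; so your attribution of the role of $\alpha>1$ is slightly off, but harmlessly so since the hypothesis is more than enough for your argument. The $r=1$ edge case you flag is genuine---the constant $(r/(r-1))^r$ in the paper's own proof blows up there---but the lemma is only ever invoked downstream with $r=4$.
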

%------------ proof of lemma A4 starts here ----------------------------
\begin{proof}
Applying $C_r$ inequality, we can write
\begin{align*}
\left( G_{n}-G_{F}\right) ^{r} & = \left\{ \widehat{\Delta }
_{n}\left( \frac{1}{2\overline{X}_{n}}-\frac{1}{2\mu }\right) +\frac{1}{2\mu
}\left( \widehat{\Delta }_{n}-\Delta \right) \right\} ^{r}
 \\
&  \le  \frac{1}{2} \left\{ \widehat{\Delta }_{n}^{r}\left( \frac{1}{
\overline{X}_{n}}-\frac{1}{\mu }\right) ^{r}+\frac{1}{\mu ^{r}}\left(
\widehat{\Delta }_{n}-\Delta \right) ^{r}\right\} 
\end{align*}
By Cauchy-Schwarz inequality and lemma  9.2.4 of Ghosh et al. (1997), we have
\begin{align}\label{A41}
& 2 E\(\max_{ n_{1c} \le n \le n_{2c} } (G_n - G_F)^r \) 
\\
& \le \l  E\( \max_{ n_{1c} \le n \le n_{2c} } \widehat{\Delta}_n^{2r} \)  E\( \max_{ n_{1c} \le n \le n_{2c} } \( \frac{1}{ \overline{X}_n } - \frac{1}{\mu} \)^{2r} \) \r^{\frac{1}{2}} + \frac{1}{\mu^r} \( \frac{r}{r-1} \)^r E\(  \widehat{\Delta}_{n_{1c}} -\Delta \)^r \notag 
\\
& \le  \l  E\( \max_{ n_{1c} \le n \le n_{2c} } \widehat{\Delta}_n^{2r} \) \r^{\frac{1}{2}} \l \frac{1}{\mu^{2r}}E\( \max_{ n_{1c} \le n \le n_{2c} } (\overline{X}_n -\mu)^{4r} \) E\( \max_{ n_{1c} \le n \le n_{2c} } \frac{1}{ \overline{X}_n^{4r} }  \)\r^{\frac{1}{4}} + O( n_{1c}^{-r/2}).  \notag
\end{align}
The last inequality is obtained by Cauchy-Schwarz inequality and lemma  2.2 of Sen and Ghosh (1981). Note that, by  lemma  9.2.4 of Ghosh et al. (1997), lemma  2.2 of Sen and Ghosh (1981), and existence of $E\(X_1^{4r} \)$,
\begin{align}
& E\( \max_{ n_{1c} \le n \le n_{2c} } (\overline{X}_n -\mu)^{4r} \)  \le \( \frac{4r}{4r-1}\)^{4r} E\( \overline{X}_{n_{1c}} - \mu \)^{4r} \le O( n_{1c}^{-2r}), \label{A42}
\\
& E\( \max_{ n_{1c} \le n \le n_{2c} } \widehat{\Delta}_n^{2r}\)  \le \( \frac{2r}{2r-1}\)^{2r} E\( \widehat{\Delta}_{n_{1c}}^{2r} \) < \infty, \,\, \text{ and} \label{A43} 
\\
& E\( \max_{ n_{1c} \le n \le n_{2c} } \frac{1}{ \overline{X}_n^{4r} } \)  \le 1+ \int_{1}^{\infty}P\( \max_{ n_{1c} \le n \le n_{2c} } \frac{1}{ \overline{X}_n^{4r} } \ge t \) dt \le 1+ \frac{E\( \overline{X}_{n_{1c}}^{-4r\alpha} \) }{\alpha -1} \label{A44}
\end{align}   
is finite as $E\(X_1^{-4r\alpha}\) < \infty$. The last inequality is due to the maximal inequality for reverse submartingales (Lee, p. 112, 1990). Using \eqref{A42}-\eqref{A44} in the upper bound for \eqref{A41}, we complete the proof. 
\end{proof}
%=========== End of Lemma A4 ===========================
%
%========== Lemma A5 ================================
\begin{lemma}\label{A5}
If $E(X_{1}^{8})$ and
$E(X_{1}^{-\alpha})$ exist for $\alpha > 8$, then $E\left[ \underset{n \ge m} \sup V_n^2 \right] < \infty$ for $m \ge 4$.
\end{lemma}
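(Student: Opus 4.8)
The plan is to dominate $\sup_{n \ge m} V_n^2$ by a finite sum of products of suprema of elementary statistics and then to verify that each such supremum has finite expectation. Starting from the expression for $V_n^2$ in \eqref{est-of-xi2}, observe that the subtracted term $\widehat{\Delta}_n \widehat{\tau}_n / \overline{X}_n^3$ is nonnegative (as $X\ge0$ forces $\widehat{\tau}_n,\widehat{\Delta}_n\ge0$), so
\[
\sup_{n \ge m} V_n^2 \le \frac{1}{4}\Bigl(\sup_{n\ge m}\widehat{\Delta}_n^2\Bigr)\Bigl(\sup_{n\ge m}S_n^2\Bigr)\Bigl(\sup_{n\ge m}\overline{X}_n^{-4}\Bigr) + \Bigl(\sup_{n\ge m}\widehat{\Delta}_n^2\Bigr)\Bigl(\sup_{n\ge m}\overline{X}_n^{-2}\Bigr) + \frac{1}{4}\Bigl(\sup_{n\ge m}s_{wn}^2\Bigr)\Bigl(\sup_{n\ge m}\overline{X}_n^{-2}\Bigr).
\]
Taking expectations and applying H\"older's (Cauchy--Schwarz) inequality to each product, it suffices to show that $E[\sup_n \widehat{\Delta}_n^{\,p}]$, $E[\sup_n S_n^{\,p}]$, $E[\sup_n \overline{X}_n^{-q}]$, and $E[\sup_n s_{wn}^{\,p}]$ are finite for the finitely many exponents that arise.

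For the factors built from U-statistics this is routine. Since $\widehat{\Delta}_n$, $S_n^2$ and $\overline{X}_n$ are U-statistics, each power $\widehat{\Delta}_n^{\,p}$, $S_n^{\,p}$ (with $p\ge1$) and each negative power $\overline{X}_n^{-q}$ is a convex function of a reverse martingale, hence a reverse submartingale; the $L^p$ maximal inequality for reverse submartingales (lemma 9.2.4 of Ghosh et al. (1997), as used in \eqref{A42}--\eqref{A43}) gives $E[\sup_{n\ge m}\widehat{\Delta}_n^{\,p}] \le (p/(p-1))^p E[\widehat{\Delta}_m^{\,p}]$ and the analogue for $S_n^{\,p}$, which are finite once $E(X_1^{\,p})<\infty$. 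The negative moments of $\overline{X}_n$ are handled exactly as in \eqref{A44}: the weak maximal inequality together with lemma \ref{A8} yields $E[\sup_{n\ge m}\overline{X}_n^{-q}]\le 1 + E(\overline{X}_m^{-q\beta})/(\beta-1)\le 1 + E(X_1^{-q\beta})/(\beta-1)$ for any $\beta>1$. The tightest bookkeeping comes from the first term: choosing the H\"older exponents $(4,4,2)$ there forces the factors $E[\sup_n \widehat{\Delta}_n^{8}]$, $E[\sup_n S_n^{8}]$ and $E[\sup_n \overline{X}_n^{-8}]$, which are finite precisely under $E(X_1^{8})<\infty$ and $E(X_1^{-\alpha})<\infty$ with $\alpha>8$ (taking $\beta=\alpha/8>1$); the remaining terms need strictly weaker moments. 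This is where the stated hypotheses enter.

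The main obstacle is the term involving $s_{wn}^2$, since $s_{wn}^2$ is not a U-statistic and the reverse-submartingale machinery does not apply to it directly. Here I would follow the proof of lemma 3.1 of Sen and Ghosh (1981), already invoked for \eqref{A35}. Using $\overline{W}_n = 2\widehat{\Delta}_n$ and $W_{jn}-\overline{W}_n = (n-2)(\widehat{\Delta}_n - \widehat{\Delta}_n^{(j)})$, the estimator reduces to the jackknife form $s_{wn}^2 = \frac{(n-2)^2}{n-1}\sum_{j=1}^n (\widehat{\Delta}_n - \widehat{\Delta}_n^{(j)})^2$, which, by the Sen--Ghosh representation, can be written as a finite linear combination $\sum_k c_{k,n} U_n^{(k)}$ of U-statistics whose symmetric kernels are products of at most two factors $\abs{X_a - X_b}$ (degree at most $4$) and whose coefficients $c_{k,n}$ are bounded uniformly in $n$. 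By the $C_r$ inequality, $\sup_{n\ge m} s_{wn}^4 \le C\sum_k \sup_{n\ge m}(U_n^{(k)})^2$, and since each $(U_n^{(k)})^2$ is a convex function of a reverse martingale, the maximal inequality gives $E[\sup_{n\ge m}(U_n^{(k)})^2]\le 4\,E[(U_m^{(k)})^2]<\infty$, the last expectation being finite under $E(X_1^8)<\infty$ (indeed $E(X_1^4)<\infty$ already suffices for these degree-$\le4$ kernels). The restriction $m\ge4$ guarantees that all U-statistics arising here, including the degree-$4$ pieces of $s_{wn}^2$ and the leave-one-out statistics $\widehat{\Delta}_n^{(j)}$ (which need $n\ge3$), are well defined with finite moments at the starting index. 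Combining the finite bounds for all terms completes the proof; the technical heart is the U-statistic representation and moment control of $s_{wn}^2$.
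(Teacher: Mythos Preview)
Your proposal is correct and follows essentially the same route as the paper: bound each surviving term of $V_n^2$ by a product of suprema, apply Cauchy--Schwarz/H\"older, control the U-statistic factors via the reverse-submartingale maximal inequality (lemma~9.2.4 of Ghosh et al.\ and \eqref{A44}), and handle $s_{wn}^2$ by citing the Sen--Ghosh (1981) representation. Your observation that the $\widehat{\Delta}_n\widehat{\tau}_n/\overline{X}_n^{3}$ term is nonnegative and can simply be dropped is a neat shortcut the paper does not take (it bounds that term separately, requiring $E(X_1^{-\alpha})$ for $\alpha>6$), but the overall strategy and the binding moment conditions---$E(X_1^8)$ and $E(X_1^{-\alpha})$ for $\alpha>8$ coming from the $\widehat{\Delta}_n^2 S_n^2/\overline{X}_n^4$ term---are identical.
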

%
%--------------- Proof of Lemma A5 starts here ------------------------------------
\begin{proof}
To prove lemma \ref{A5}, it is enough to show that: (i) ${ E}\left[ \underset{n\geq m}{\sup } \,s_{wn}^{2}\overline{X}
_{n}^{-2}\right]$, (ii) ${ E}\left[
\underset{n\geq m}{\sup }\left\vert \frac{\widehat{\Delta }_{n}}{\overline{X}
_{n}^{3}}\widehat{\tau }_{n}\right\vert \right]$, (iii) ${ E}\left[ \underset{n\geq m}{\sup }\frac{\widehat{\Delta }
_{n}^{2}}{\overline{X}_{n}^{2}}\right]$, and  (iv)  ${ E
}\left[ \underset{n\geq m}{\sup }\frac{\widehat{\Delta}_n^2}{\overline{X}_{n}^{4}}S_{n}^{2}\right]$ are finite. Following Sen and Ghosh (p. 338, 1981), we have $E\left[ \underset{n\geq m}{\sup }\, s_{wn}^4\right] < \infty$ if $E[X_1^{\alpha}] < \infty$ for $\alpha >4$ and $m \ge 4$. By \eqref{A44},  $E\left[ \underset{n\geq m}{\sup }\, \overline{X}_n^{\, -4} \right] < \infty$ if $E[X_1^{-\alpha}] < \infty$ for $\alpha >4$. Therefore,
\begin{align*}
E\( \underset{n\geq m}{\sup } \, s_{wn}^2 \overline{X}_n^{\, -2} \) \le \l E\( \underset{n\geq m}{\sup } \, s_{wn}^4\) E\(\underset{n\geq m}{\sup }\, \overline{X}_n^{\, -4} \)  \r^{1/2} < \infty.
\end{align*}
For (ii), we note that $\widehat{\Delta}_n$ and $\widehat{\tau}_n$ are U-statistics. Using lemma 9.2.4 of  Ghosh et al. (1997), 
\[ E\( \underset{n\geq m}{\sup }\left\vert \widehat{\Delta}_n^4 \right\vert \) \le \(\frac{4}{3}\)^4 E\( \left\vert \widehat{\Delta}_m^4 \right\vert \) \,\, \text{and}\,\, 
E\( \underset{n\geq m}{\sup }\left\vert \widehat{\tau}_n^4 \right\vert \) \le \(\frac{4}{3}\)^4 E\( \left\vert \widehat{\tau}_m^4 \right\vert \).
\]
Applying Cauchy-Schwarz inequality twice,
\[
 E\( \underset{n\geq m}{\sup }\left\vert \frac{\widehat{\Delta }_{n}}{\overline{X}
_{n}^{3}}\widehat{\tau }_{n}\right\vert \) 
\le \l E\( \underset{n\geq m}{\sup }\left\vert \widehat{\Delta}_n^4 \right\vert \) E\( \underset{n\geq m}{\sup }\left\vert \widehat{\tau}_n^4 \right\vert \) \r^{\frac{1}{4}} \l E\( \underset{n\geq m}{\sup }\left\vert \overline{X}_n^{\,-6} \right\vert \) \r^{\frac{1}{2}} < \infty,
\]
if $E(X_1^8)$ and $E(X_1^{-\alpha})$ exist for $\alpha > 6$. Similarly, we can show that ${ E}\left[ \underset{n\geq m}{\sup }\frac{\widehat{\Delta }
_{n}^{2}}{\overline{X}_{n}^{2}}\right] < \infty$ if  $E(X_1^4)$ and $E(X_1^{-\alpha})$ exist for $\alpha > 4$. Finally,  ${ E
}\left[ \underset{n\geq m}{\sup }\frac{\widehat{\Delta}_n^2}{\overline{X}_{n}^{4}}S_{n}^{2}\right] < \infty$ if $E(X_1^8)$ and $E(X_1^{-\alpha})$ exist for $\alpha > 8$. This completes the proof. 
\end{proof}
%============ Lemma A5 ends here ============================
%
%============ Lemma A6 ============================
\begin{lemma}\label{A6}
Let $U_n$ be a U-statistics for estimating $\theta$ based on $n$ observations. For any $\epsilon \in(0, 1)$,
\[ { E}\left( \underset{n_{2c}\leq n\leq n_{c}}{\max }\left( { U}_{n}{ -U}_{n_{c}}\right) ^{4}\right) { =O}\left( \frac{\epsilon}{n_c^2}\right) \,\, \text{ as } \, c \downarrow 0.\]
\end{lemma}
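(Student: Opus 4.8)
The plan is to reduce the maximum to a single terminal term using the reverse-martingale structure recorded just before Lemma~\ref{A1}, and then to control that terminal fourth moment by exploiting that $n_{2c}$ and $n_c$ are close. Since $E(U_n\mid\mathcal F_{n'})=U_{n'}$ for $n'\ge n$, and $U_{n_c}$ is $\mathcal F_{n'}$-measurable whenever $n'\le n_c$, the centered process $\{U_n-U_{n_c},\,\mathcal F_n\}_{n\le n_c}$ is itself a reverse martingale, so $(U_n-U_{n_c})^4$ is a nonnegative reverse submartingale over $n\in[n_{2c},n_c]$. Applying the reverse-martingale maximal inequality (lemma 9.2.4 of Ghosh et al. 1997, anchored at the smallest index $n_{2c}$, which carries the largest $\sigma$-algebra) gives
\[
E\Big(\max_{n_{2c}\le n\le n_c}(U_n-U_{n_c})^4\Big)\le\Big(\tfrac{4}{3}\Big)^4 E\big((U_{n_{2c}}-U_{n_c})^4\big),
\]
so it remains only to show $E\big((U_{n_{2c}}-U_{n_c})^4\big)=O(\epsilon/n_c^2)$.

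For the terminal term I would write it as a sum of the martingale increments $D_j=U_{j-1}-U_j$ over $j=n_{2c}+1,\ldots,n_c$ (these telescope to $U_{n_{2c}}-U_{n_c}$ and are the increments of the forward martingale $M_k=U_{n_c-k}$) and apply Burkholder's inequality, $E\big((U_{n_{2c}}-U_{n_c})^4\big)\le C\,E\big((\sum_j D_j^2)^2\big)$, which sidesteps the non-vanishing conditional third moments that a direct fourth-power expansion would produce. Expanding the square and using Cauchy--Schwarz on the cross terms bounds this by $\big(\sum_j(E(D_j^4))^{1/2}\big)^2$. The key analytic input is the increment bound $E(D_j^4)=O(j^{-4})$, the fourth-moment analogue of lemma 2.2 of Sen and Ghosh (1981); it is transparent for a degree-one kernel, where $U_{n-1}-U_n=(\overline X_{n-1}-X_n)/n$, and in general follows from the Hoeffding projection, whose leading term scales like $j^{-1}$ under the paper's moment conditions. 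Since every index in $[n_{2c},n_c]$ is of exact order $n_c$ and the range contains $n_c-n_{2c}=\epsilon\,n_c$ of them, $\sum_j(E(D_j^4))^{1/2}=O(\epsilon n_c\cdot n_c^{-2})=O(\epsilon/n_c)$, and squaring yields $E\big((U_{n_{2c}}-U_{n_c})^4\big)=O(\epsilon^2/n_c^2)$, which since $\epsilon\in(0,1)$ is well within the claimed $O(\epsilon/n_c^2)$.

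The crux, and the only place genuine work is needed, is the increment moment bound with its explicit $j^{-4}$ rate, equivalently the fact that the closeness of $n_{2c}$ and $n_c$ forces the terminal difference to be small: a crude split $(U_{n_{2c}}-U_{n_c})^4\le 8\big((U_{n_{2c}}-\theta)^4+(U_{n_c}-\theta)^4\big)$ only gives $O(n_c^{-2})$ and discards the essential factor $\epsilon$. A cleaner alternative that avoids handling increments one by one is moment equivalence in a fixed Hoeffding chaos: because $U_{n_{2c}}-U_{n_c}$ is a finite sum of Hoeffding components of bounded degree, its $L^4$ and $L^2$ norms are comparable, while the $L^2$ term is exact by increment orthogonality, $E\big((U_{n_{2c}}-U_{n_c})^2\big)=\mathrm{Var}(U_{n_{2c}})-\mathrm{Var}(U_{n_c})=O(\epsilon/n_c)$, whose square again delivers the bound. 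Either route completes the proof, with the moment verification under the stated assumptions being the substantive step.
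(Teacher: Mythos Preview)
Your reduction step is exactly the paper's: apply the reverse-submartingale maximal inequality (lemma 9.2.4 of Ghosh et al.\ 1997) to replace the maximum by the single terminal term $E\big[(U_{n_{2c}}-U_{n_c})^4\big]$. Where you diverge is in handling that terminal term. The paper does not use Burkholder or hypercontractivity; instead it centers at $\theta$, sets $V_n=U_n-\theta$, expands $(V_{n_{2c}}-V_{n_c})^4$, and uses the conditioning identity $E[V_{n_{2c}}\mid\mathcal F_{n_c}]=V_{n_c}$ together with Cauchy--Schwarz to bound all cross moments, arriving at $E[(U_{n_{2c}}-U_{n_c})^4]\le 7\{E[V_{n_{2c}}^4]-E[V_{n_c}^4]\}$. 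It then invokes the known asymptotic form of the fourth central moment of a U-statistic (Sen, 1981, p.~55), $E[V_n^4]=Cn^{-2}+o(n^{-2})$, so that the difference is $O(n_{2c}^{-2}-n_c^{-2})=O(\epsilon/n_c^2)$. Your Burkholder route is correct and in fact slightly sharper, yielding $O(\epsilon^2/n_c^2)$; the key increment estimate $E[(U_{j-1}-U_j)^4]=O(j^{-4})$ follows from the elementary identity $U_j-U_{j-1}=\tfrac{m}{j}(T_j-U_{j-1})$ with $T_j$ bounded in $L^4$, which you only sketch. Your alternative via $L^2$--$L^4$ moment equivalence in a fixed Hoeffding chaos is less self-contained and needs care (such comparison is not automatic for general kernels), whereas the paper's argument requires nothing beyond the already-cited fourth-moment asymptotics for U-statistics.
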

%--------------- Proof of Lemma A6 starts here ------------------------------------
%
\begin{proof}
Since $\l U_n - U_{n_c}\r_{n=n_{2c}}^{n_c}$ is a reverse martingale, lemma 9.2.4 of Ghosh et al. (1997) yields
\begin{align}\label{A61}
{ E}\left( \underset{n_{2c}\leq n\leq n_{c}}{\max }\left( { U}_{n}{ -U}_{n_{c}}\right) ^{4}\right) \le \(\frac{4}{3}\)^4 E\( U_{n_{2c}} - U_{n_c}\)^4. 
\end{align}  
Let $V_n = U_n - \theta$. Using reverse martingale property of $V_n$, i.e., $ E(V_{n_{2c}}\, |\, \mathcal{F}_{n_c}) = V_{n_c},$ we have 
\begin{align}
& E\( V_{n_{2c}}V_{n_c}^3\) = E\(V_{n_c}^4\), \quad E\( V_{n_{2c}}^3 V_{n_c}\) \ge E\(V_{n_c}^4\), \,\, \text{ and }\label{A62} \\
& E\( V_{n_{2c}}^2 V_{n_c}^2\) \le \l E\(V_{n_{2c}}^4\) E\(V_{n_c}^4\) \r^{\frac{1}{2}} \le  E\( V_{n_{2c}}^4\). \label{A63}
\end{align}
Using \eqref{A62}-\eqref{A63} and asymptotic form of $4^{th}$ central moment of U-statistics (Sen, p. 55, 1981),
\begin{align}\label{A64}
E\( U_{n_{2c}} - U_{n_c}\)^4 &= E\( V_{n_{2c}}^4 \) + E\( V_{n_c}^4 \) - 4E\( V_{n_{2c}}V_{n_c}^3\) - 4E\( V_{n_{2c}}^3 V_{n_c}\) + 6 E\( V_{n_{2c}}^2 V_{n_c}^2\) \notag 
\\
& \le 7 \l   E\( V_{n_{2c}}^4 \) - E\( V_{n_c}^4 \) \r = O\( \frac{1}{n_{2c}^2} - \frac{1}{n_c^2} \) + o\(\frac{1}{n_c^2} \) = O\left( \frac{\epsilon}{n_c^2}\right).
\end{align}
\eqref{A64} is obtained by noting that $n_{2c} = n_c(1-\epsilon)$. Hence, the proof is complete.
\end{proof}
%==============Proof of Lemma A6 ends here =======================================
%
%============ Lemma A7 ============================
\begin{lemma}\label{A7}
 If nonnegative i.i.d. observations $X_1, \ldots, X_n$ are such that $E(X_1^{8})$ and $E(X_1^{-16\alpha})$ exist for $\alpha >1$, then for $\epsilon \in(0, 1)$,
\[ { E}\left[ \underset{n_{2c}\leq n\leq n_{3c}}{\max }\left( G_n - G_{n_c} \right)^2\right] { =O}\left( \frac{\sqrt{\epsilon}}{n_c}\right) \,\, \text{ as } \, c \downarrow 0.\]
\end{lemma}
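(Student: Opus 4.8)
The plan is to reduce the increment $G_n - G_{n_c}$ to the increments of the two underlying U-statistics $\widehat{\Delta}_n$ and $\overline{X}_n$, each governed by Lemma \ref{A6}, while absorbing every reciprocal sample-mean factor into a single supremum with finite moments. Writing $G_n = \widehat{\Delta}_n/(2\overline{X}_n)$ and adding and subtracting $\widehat{\Delta}_{n_c}/(2\overline{X}_n)$, the first step is the elementary identity
\begin{align*}
G_n - G_{n_c} = \frac{\widehat{\Delta}_n - \widehat{\Delta}_{n_c}}{2\overline{X}_n} - \frac{\widehat{\Delta}_{n_c}\(\overline{X}_n - \overline{X}_{n_c}\)}{2\,\overline{X}_n\,\overline{X}_{n_c}}.
\end{align*}
Since $n_{2c}=n_c(1-\epsilon)\to\infty$ as $c\downarrow 0$, the whole window $[n_{2c},n_{3c}]$ eventually lies in $\{n\ge m\}$, so I would set $M=\sup_{n\ge m}\overline{X}_n^{-1}$ and replace each varying reciprocal by $M$ (using $\overline{X}_n^{-1}\le M$ and $\overline{X}_{n_c}^{-1}\le M$). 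The $C_r$-inequality then gives
\begin{align*}
\max_{n_{2c}\le n\le n_{3c}}\(G_n - G_{n_c}\)^2 \le \frac{M^2}{2}\max_{n_{2c}\le n\le n_{3c}}\(\widehat{\Delta}_n - \widehat{\Delta}_{n_c}\)^2 + \frac{\widehat{\Delta}_{n_c}^2 M^4}{2}\max_{n_{2c}\le n\le n_{3c}}\(\overline{X}_n - \overline{X}_{n_c}\)^2.
\end{align*}

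Next I would take expectations and decouple each term by Cauchy--Schwarz so that the rate is carried entirely by the maximal U-statistic increment factors. For the first term,
\begin{align*}
E\L\frac{M^2}{2}\max_{n_{2c}\le n\le n_{3c}}\(\widehat{\Delta}_n - \widehat{\Delta}_{n_c}\)^2\R \le \frac{1}{2}\L E\(M^4\)\R^{1/2}\L E\(\max_{n_{2c}\le n\le n_{3c}}\(\widehat{\Delta}_n - \widehat{\Delta}_{n_c}\)^4\)\R^{1/2},
\end{align*}
where Lemma \ref{A6} (applied over $[n_{2c},n_c]$, together with the symmetric reverse-martingale argument over $[n_c,n_{3c}]$ anchored at $n_{3c}$) makes the last factor $O(\sqrt{\epsilon}/n_c)$. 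For the second term I would apply Cauchy--Schwarz twice, i.e.\ H\"older with exponents $4,4,2$,
\begin{align*}
E\L\frac{\widehat{\Delta}_{n_c}^2 M^4}{2}\max_{n_{2c}\le n\le n_{3c}}\(\overline{X}_n - \overline{X}_{n_c}\)^2\R \le \frac{1}{2}\L E\(\widehat{\Delta}_{n_c}^8\)\R^{1/4}\L E\(M^{16}\)\R^{1/4}\L E\(\max_{n_{2c}\le n\le n_{3c}}\(\overline{X}_n - \overline{X}_{n_c}\)^4\)\R^{1/2},
\end{align*}
and Lemma \ref{A6}, now for the degree-one kernel $\overline{X}_n$, again renders the last factor $O(\sqrt{\epsilon}/n_c)$.

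The crux is the moment bookkeeping, and it is here that the hypotheses $E(X_1^8)<\infty$ and $E(X_1^{-16\alpha})<\infty$ for some $\alpha>1$ are consumed precisely. The factor $E(\widehat{\Delta}_{n_c}^8)$ is finite because the GMD kernel $\abs{X_1-X_2}$ satisfies $E\abs{X_1-X_2}^8<\infty$ whenever $E(X_1^8)<\infty$; the factor $E(M^{16})=E(\sup_{n\ge m}\overline{X}_n^{-16})$ is finite by the reverse-submartingale maximal inequality of the form \eqref{A44} with $4r=16$, which only needs $E(\overline{X}_m^{-16\alpha})\le E(X_1^{-16\alpha})<\infty$ via Lemma \ref{A8}; and $E(M^4)<\infty$ a fortiori. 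I expect the main obstacle to be exactly this coupling: the running reciprocals $\overline{X}_n^{-1}$ are correlated with the U-statistic increments and vary across the maximization window, so they cannot be treated as constants. The device of bounding them by the single supremum $M$ and then splitting $M$ off through iterated Cauchy--Schwarz is what keeps the negative-moment demand at order $16\alpha$ and prevents the $O(\sqrt{\epsilon}/n_c)$ rate, which originates solely from the fourth-moment U-statistic bounds of Lemma \ref{A6}, from being degraded. Adding the two $O(\sqrt{\epsilon}/n_c)$ contributions completes the proof.
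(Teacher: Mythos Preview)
Your argument is correct and follows essentially the same route as the paper: decompose $G_n-G_{n_c}$ into the $\widehat{\Delta}$- and $\overline{X}$-increments, decouple via Cauchy--Schwarz/H\"older, and let Lemma~\ref{A6} supply the $O(\sqrt{\epsilon}/n_c)$ rate while the reciprocal-mean factors are controlled through \eqref{A44} and Lemma~\ref{A8}. The only cosmetic differences are that the paper adds and subtracts $\widehat{\Delta}_n/(2\overline{X}_{n_c})$ rather than $\widehat{\Delta}_{n_c}/(2\overline{X}_n)$ (so its nuisance factor is $\max_n\widehat{\Delta}_n^8$ instead of your fixed $\widehat{\Delta}_{n_c}^8$), and it bounds the reciprocals by the local maximum over $[n_{2c},n_c]$ rather than your global $M=\sup_{n\ge m}\overline{X}_n^{-1}$; neither change affects the moment count or the conclusion.
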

%--------------- Proof of Lemma A7 starts here ------------------------------------
%
\begin{proof}
$ E\left[ \underset{n_{2c}\leq n\leq n_{3c}}{\max }\left( G_n - G_{n_c} \right)^2\right] \le E_1 + E_2$, where $E_2 = E\left[ \underset{n_{c}\leq n\leq n_{3c}}{\max }\left( G_n - G_{n_c} \right)^2\right]$, and
\begin{align}\label{A71}
E_1 & = E\left[ \underset{n_{2c}\leq n\leq n_{c}}{\max }\left( G_n - G_{n_c} \right)^2\right] \notag
\\
& = E \left[   \underset{n_{2c}\leq n\leq n_{c}}{\max } \l \( \frac{1}{\overline{X}_n} -  \frac{1}{\overline{X}_{n_c}} \) \frac{\widehat{\Delta}_n}{2}  +  \frac{1}{2 \overline{X}_{n_c}}\( \widehat{\Delta}_n - \widehat{\Delta}_{n_c} \) \r^2 \right] \le \frac{E_{11} + E_{12}}{4},
\end{align}
where $E_{11} =  E \left[   \underset{n_{2c}\leq n\leq n_{c}}{\max } \( \frac{1}{\overline{X}_n} -  \frac{1}{\overline{X}_{n_c}} \)^2 \widehat{\Delta}_n^2  \right]$ and $E_{12} = E \left[   \underset{n_{2c}\leq n\leq n_{c}}{\max } \frac{1}{ \overline{X}_{n_c}^2 }\( \widehat{\Delta}_n - \widehat{\Delta}_{n_c} \)^2 \right]$. Applying Cauchy-Schwarz inequality thrice, we can write
\[ 
E_{11} \le \l  E \left[   \underset{n_{2c}\leq n\leq n_{c}}{\max } \( \overline{X}_n -\overline{X}_{n_c} \)^4 \right] \r^{\frac{1}{2}} \l  E \left[   \underset{n_{2c}\leq n\leq n_{c}}{\max } \widehat{\Delta}_n^8 \right] \r^{\frac{1}{4}} \l E\( \frac{1}{ \overline{X}^{16}_{n_c}} \) E\( \underset{n_{2c}\leq n\leq n_{c} }{\max } \frac{1}{\overline{X}^{16}_n }\) \r^{ \frac{1}{8} }. 
\]
Using lemma \ref{A6}, lemma 9.2.4 of Ghosh et al. (1997), \eqref{A44}, and the conditions of lemma \eqref{A7}, we conclude that $E_{11} = O\( \sqrt{\epsilon}/ n_c \)$. Similarly, using Cauchy-Schwarz inequality and lemma \ref{A6}, we have $E_{12} = O\( \sqrt{\epsilon}/ n_c \)$. Therefore, $E_1 = O\( \sqrt{\epsilon}/ n_c \)$. Following the same arguments as above, one can show that $E_2 = O\( \sqrt{\epsilon}/ n_c \)$. Hence, lemma \ref{A7} is proved. 
\end{proof}
%==============Proof of Lemma A7 ends here =======================================
%
%======================================================================

\subsection{Proof of Theorem \ref{thm:main}}
The proof of the parts (i) and (ii) are similar to \cite{Chde2014}.
\emph{(i)} The definition of stopping rule $N_c$ in \eqref{stopping-rule} yields
\begin{align}\label{T1}
\sqrt{\frac{A}{c}}\, V_{N_{c}} \, \le N_c  \, \le m \, + \, \sqrt{\frac{A}{c}}\left( { V}_{N_{c}-1}+{ (N}_{c}{ -1)}^{-\gamma }\right).
\end{align}
Since $N_c \to \infty$ a.s. as $c \downarrow 0$ and $V_n \to \xi$ a.s. as $n \to \infty$, by theorem 2.1 of Gut (\citeyear{gut2009stopped}), $V_{N_c} \to \xi$ a.s.. Hence, dividing all sides of \eqref{T1} by $n_{c}$ and letting $c\rightarrow 0$,
we prove $N_{c}/n_{c}\to 1$ a.s. as $c\downarrow 0$.

\noindent \emph{(ii)} Since $N_c \ge m$ a.s. and $n_c \ge 1$, dividing \eqref{T1} by $n_c$ yields
\begin{align}\label{T2}
N_c /n_c \le m + \frac{1}{\xi} \( \underset{c >0}\sup \, V_{N_c -1} + (m-1)^{-\gamma}\) \,\,\text{ almost surely,}
\end{align}
where $E\( \underset{c >0}\sup V_{N_c -1} \) < \infty$ by lemma \ref{A5}. 
Since  $N_{c}/n_{c}\to 1$ a.s. as $c\downarrow 0$, by the dominated convergence theorem, we conclude that  $\lim_{c \downarrow 0} E(N_c / n_c) =1$. 

\noindent \emph{(iii)} We need to show $\underset{ c\downarrow 0 }\lim R_{N_c}(G_F)/  R^{\ast}_{n_c}(G_F)  = \underset{ c\downarrow 0 }\lim  (A/ 2c n_c) E\( G_{N_c} - G_F\)^2 + \frac{1}{2}\, \underset{c \downarrow 0} \lim E\( N_c / n_c \) =1$. Thus, it is enough to show that $ \underset{ c \downarrow 0} \lim (A / c n_c) E\( G_{N_c} - G_F \)^2 =1$, i.e., $\underset{ c \downarrow 0} \lim \, n_c \,E\( G_{N_c} - G_F \)^2 = \xi^2$. Since we know that $ n_c E\( G_{n_c} - G_F \)^2 = \xi^2$, it is sufficient to show that
\begin{align}\label{T3}
\lim_{c \downarrow 0} n_c  \l  E\( \( G_{N_c} - G_F \)^2 - \( G_{n_c} - G_F \)^2\) \r=0.  
\end{align}
Let $E_1 = E\left[ (G_{N_c} - G_F)^2 I(N_c \le n_{2c}) \right]$. By \eqref{defs}, lemma \ref{A3}, and lemma \ref{A4}, we have
\begin{align}\label{T4}
n_c E_1 & \le E\left[ \underset{ n_{1c} \le n \le n_{2c} } \max \, (G_n - G_F)^2 I(N_c \le n_{2c}) \right] \notag
\\
& \le n_c \l E\left[ \underset{ n_{1c} \le n \le n_{2c} } \max \, (G_n - G_F)^4 \right] P(N_c \le n_{2c})  \r^{\frac{1}{2}} = O\( c^h\), 
\end{align}
where $h = (1- 2\gamma)/(4 + 4\gamma) >0$ using $\gamma \in (0, \frac{1}{2})$. Here, we assume that $E(X_1^{16})$ and $E(X_1^{-16 \alpha})$ exist for $\alpha >1$. Following the same arguments as in lemma \ref{A4}, we can show that $E\( G_{n_c} - G_F \)^4 = O\( n_c^{-2}\)$ provided $E(X_1^{16})$ and $E(X_1^{-16 \alpha})$ exist for $\alpha >1$. Let $E_2 = E\left[ (G_{n_c} - G_F)^2 I(N_c \le n_{2c}) \right]$. By Cauchy-Schwarz inequality and lemma \ref{A3}, we have
\begin{align}\label{T5}
n_c E_2 \le n_c \l E\left[ (G_{n_c} - G_F)^4 \right] P(N_c \le n_{2c})  \r^{\frac{1}{2}} = O\( c^{\frac{1}{1+\gamma}}\)
\end{align}
provided  $E(X_1^{16})$ and $E(X_1^{-24})$ exist. Therefore, combining \eqref{T4} and \eqref{T5}, we have
\begin{align}\label{T6}
\lim_{c \downarrow 0} \, n_c E\left[ \l (G_{N_c} - G_F)^2  - (G_{n_c} - G_F)^2 \r I(N_c \le n_{2c}) \right] = 0.
\end{align}
Using the same arguments as in lemma \ref{A4}, one can show that $E\left[ \underset{n \ge n_{3c}} \max  \( G_n - G_F \)^4 \right] = O\( n_{3c}^{-2}\)$ provided $E(X_1^{16})$ and $E(X_1^{-16 \alpha})$ exist for $\alpha >1$. Let $E_3 = E\left[ (G_{N_c} - G_F)^2 I(N_c \ge n_{3c}) \right]$.  Cauchy-Schwarz inequality and lemma \ref{A3} yields
\begin{align}\label{T7}
n_c E_3  \le n_c \l E\left[ \underset{  n \ge n_{3c} } \max \, (G_n - G_F)^4 \right] P(N_c \ge n_{3c}) \r^{\frac{1}{2}} =  O\( c^{ \frac{1}{2 +2 \gamma} }\).
\end{align}
Following the same approach as in \eqref{T5}, $n_c E\left[  (G_{n_c} - G_F)^2 I(N_c \ge n_{3c}) \right] \le O\( c^{ \frac{1}{2 +2 \gamma} }\)$. Thus,
\begin{align}\label{T8}
\lim_{c \downarrow 0} \, n_c E\left[ \l (G_{N_c} - G_F)^2  - (G_{n_c} - G_F)^2 \r I(N_c \ge n_{3c}) \right] = 0.
\end{align}
Hence, it remains to prove that 
\begin{align}\label{T9}
\lim_{c \downarrow 0} \, n_c E\left[ \l (G_{N_c} - G_F)^2  - (G_{n_c} - G_F)^2 \r I(n_{2c} \le N_c \le n_{3c}) \right] = 0.
\end{align}
Let $W = \left\{ { (G}_{N_{c}}{ -G}_{F}{ )}^{{ 2}}-{ (G}_{n_{c}}{ -G}_{F}{ )}^{{ 2}}\right\}  I(n_{2c}\le
N_c  \le n_{3c} )$. Note that
\begin{align*}
 W &=\left\{ { (G}_{N_{c}}{ -G}_{F}{ )}+{ (G}
_{n_{c}}{ -G}_{F}{ )}\right\}   (G_{N_{c}} -G_{n_{c}} )  I(n_{2c}\le N_c \le n_{3c}) \notag
\\
& \le 2 \left\{ \underset{n_{2c}\leq n\leq n_{3c}}{\max }\left\vert
{ G}_{n}{ -G}_{F}\right\vert \right\} \left\{ \underset{
n_{2c}\leq n\leq n_{3c}}{\max }\left\vert { G_{n}-G}
_{n_{c}}\right\vert \right\}  I(n_{2c}\le N_c \le n_{3c}). 
\end{align*}
Using Cauchy-Schwarz inequality, lemma \ref{A7}, and following the lines of lemma \ref{A4},
\begin{align}\label{T10}
 n_c E[ W] & \le 2n_c \l E \left( \underset{n_{2c}\le n\le n_{3c}} \max \left(  G_n  -G_F \right)^2\right)
 E \left( \underset{n_{2c}\le n\le n_{3c}} \max \left(  G_n - G_{n_{c}}\right)^2 \right) \r^{ \frac{1}{2} } \notag
\\
& \le 2 n_c \l O\( n_c^{-1}\) O\( \frac{\sqrt{\epsilon}}{n_c}\) \r^{\frac{1}{2}} = O\(\epsilon^{1/4}\).
\end{align}
Since \eqref{T10} is true for any $\epsilon \in (0, 1)$, taking limit on both sides of \eqref{T10} as $\epsilon \to 0$, \eqref{T9} is proved. Hence, the proof of theorem \ref{thm:main} is complete.
%====================proof of main theorem complete =================================
%========================================================================

\subsection{Proof of lemma \ref{lem:asym-risk} }

By bivariate Taylor expansion of $f(\widehat{\Delta}_n, 2\overline{X}_n) = \frac{\widehat{\Delta}_n}{2\overline{X}_n}$ around $(\Delta, 2\mu)$,
\begin{align}\label{L1}
\frac{\widehat{\Delta}_n}{2\overline{X}_n} - \frac{\Delta}{2\mu} = \frac{\widehat{\Delta}_n - \Delta}{2\mu} - \frac{\Delta}{2 \mu^2}(\overline{X}_n - \mu) + R_{1n},
\end{align}
where $R_{1n} = -2(\widehat{\Delta}_n - \Delta)(\overline{X}_n - \mu)/ b^2 \, +\, 4 a (\overline{X}_n - \mu)/ b^3$, $a=\Delta + p(\widehat{\Delta}_n - \Delta)$, $b=2\mu + p(2\overline{X}_n - 2\mu)$, and $p \in (0, 1)$. Let $E_{1n} = E(R_{1n}^2)$,  $E_{2n} = \frac{1}{\mu}E\(R_{1n}\(\widehat{\Delta}_n - \Delta\) \)$, and  $E_{3n} = - \frac{\Delta}{\mu^2}E\(R_{1n}\(\overline{X}_n - \mu\) \)$. Squaring both sides of \eqref{L1} and taking expectation,
\begin{align}\label{L2}
E\( \frac{\widehat{\Delta}_n}{2\overline{X}_n} - \frac{\Delta}{2\mu} \)^2 = \frac{1}{4\mu^2} V(\widehat{\Delta}_n) + \frac{\Delta^2 \sigma^2}{4n\mu^4} - \frac{\Delta}{2\mu^3} cov( \widehat{\Delta}_n, \overline{X}_n ) + \sum_{i=1}^3 E_{in}. 
\end{align}
Using variance and covariance formulas for U-statistics (Lee, 1990), it is simple to show that $V\( \widehat{\Delta}_n \) = \frac{4 \sigma_1^2}{n} + O\( n^{-2}\)$ and $cov( \widehat{\Delta}_n, \overline{X}_n ) = \frac{2}{n}(\tau -\mu \Delta)$. Therefore, it remains to show that $\sum_{i=1}^3 E_{in} = O(n^{-3/2})$. First, we work on $E_{1n}$. Note that $R_{1n}^2 = 4 W_{1n} + 16 W_{2n} -16 W_{3n}$, where $W_{1n} =  \frac{ \(\widehat{\Delta}_n - \Delta\)^2 \(\overline{X}_n - \mu\)^2 }{b^4}$, $W_{2n} = \frac{a^2}{b^6} (\overline{X}_n - \mu)^4$, and $W_{3n} =  \frac{a}{b^5} \(\widehat{\Delta}_n - \Delta\) \(\overline{X}_n - \mu\)^3$. By Cauchy-Schwarz inequality and lemma 2.2 of Sen and Ghosh (1981),
\begin{align*} 
& E\left\vert  W_{1n} I(\overline{X}_n > \mu ) \right\vert \le \frac{1}{16 \mu^4}  E\( \(\widehat{\Delta}_n - \Delta\)^2 \(\overline{X}_n - \mu\)^2 \)  = O(n^{-2}), 
\\
&  E\left\vert  W_{1n} I(\overline{X}_n \le \mu ) \right\vert \le \frac{1}{16} \l E\( \(\widehat{\Delta}_n - \Delta\)^4 \(\overline{X}_n - \mu\)^4 \) E\( \frac{1}{\overline{X}_n^8}\)\r^{\frac{1}{2}} = O(n^{-2}), 
\end{align*}
provided $E(X_1^8)$ and $E(X_1^{-8})$ exist. Following the same approach, we have
\begin{align*} 
 E\left\vert  W_{2n} I(\widehat{\Delta}_n > \Delta) I(\overline{X}_n > \mu ) \right\vert & \le  E\( (\overline{X}_n - \mu)^4 \frac{\widehat{\Delta}_n^2}{(2\mu)^6}   \)  = O(n^{-2}),
\\
 E\left\vert  W_{2n} I(\widehat{\Delta}_n > \Delta) I(\overline{X}_n \le \mu ) \right\vert & \le  E\( (\overline{X}_n - \mu)^4 \frac{\widehat{\Delta}_n^2}{(2\overline{X}_n)^6}   \)  = O(n^{-2}),
\\
 E\left\vert  W_{2n} I(\widehat{\Delta}_n \le \Delta) I(\overline{X}_n > \mu ) \right\vert & \le  \frac{\Delta^2}{(2\mu)^6}  E\( \overline{X}_n - \mu \)^4  = O(n^{-2}),
\\
 E\left\vert  W_{2n} I(\widehat{\Delta}_n \le \Delta) I(\overline{X}_n \le \mu ) \right\vert & \le   E\( (\overline{X}_n - \mu)^4  \frac{\Delta^2}{(2\overline{X}_n)^6}\)  = O(n^{-2}),
\end{align*}
provided $E(X_1^{12})$ and $E(X_1^{-18})$ exist. Similarly, we can show that $E( W_{3n}) =  O(n^{-2})$ provided $E(X_1^{12})$ and $E(X_1^{-20})$ exist. Therefore, $E_{1n} = E( R_{1n}^2) =  O(n^{-2})$.  By Cauchy-Schwarz inequality  and lemma 2.2 of Sen and Ghosh (1981), we obtain $E_{2n} = O(n^{-3/2})$ and $E_{3n} = O(n^{-3/2})$. Hence, lemma \ref{lem:asym-risk} is proved. 
%=======================================================================
%=======================================================================
%=======================================================================
\section{REFERENCES}
Aguirregabiria, V. and Mira, P. (2007), “Sequential estimation of dynamic discrete games,” Econometrica, 75, 1–53.\\
Allison, P. D. (1978), “Measures of inequality,” American Sociological Review, 865–880.\\
Arcidiacono, P. and Jones, J. B. (2003), “Finite mixture distributions, sequential likelihood and the em algorithm,” Econometrica, 71, 933–946.\\
Beach, C. M. and Davidson, R. (1983), “Distribution-free statistical inference with Lorenz curves and income shares,” The Review of Economic Studies, 50, 723–735.\\
Chattopadhyay, B. and De, Shyamal, K. (2014a), “Estimation Accuracy of an Inequality Index,” Recent Advances in Applied Mathematics, Modelling and Simulation, Accepted.\\
Chattopadhyay, B. and De, Shyamal, K. (2014b), “Estimation of Gini Index within Pre-Specied Error,” Submitted.\\
Cochran, W. G. (1977), Sampling techniques, vol. 98, New York, Wiley and Sons.\\
Dantzig, G. B. (1940), “On the Non-Existence of Tests of “Student’s” Hypothesis Having Power Functions Independent of $\sigma$,” The Annals of Mathematical Statistics, 11, 186–192.\\
Davidson, R. (2009), “Reliable inference for the Gini index,” Journal of econometrics, 150, 30–40.\\
Davidson, R. and Duclos, J.-Y. (2000), “Statistical inference for stochastic dominance and for the measurement of poverty and inequality,” Econometrica, 68, 1435–1464.\\
Doob, J. L. (1953), Stochastic processes, New York Wiley.\\
Gastwirth, J. L. (1972), “The estimation of the Lorenz curve and Gini index,” The Review of Economics and Statistics, 306–316.\\
Ghosh, B. K. and Sen, P. K. (1991), Handbook of sequential analysis, vol. 118, CRC Press.\\
Ghosh, M. and Mukhopadhyay, N. (1979), “Sequential point estimation of the mean when the distribution is unspecified,” Communications in Statistics-Theory and Methods, 8, 637–652.\\
Ghosh, M. and Mukhopadhyay, N. (1981), “Consistency and asymptotic efficiency of two stage and sequential estimation procedures,” Sankhy¯a: The Indian Journal of Statistics, Series A, 220–227.\\
Ghosh, M., Mukhopadhyay, N., and Sen, P. K. (1997), Sequential estimation, Wiley (New York).\\
Greene, W. H. (1998), “Gender economics courses in liberal arts colleges: Further results,” The Journal of Economic Education, 29, 291–300.\\
Gut, A. (2009), Stopped random walks: Limit theorems and applications, Springer.\\
Hoeffding, W. (1948), “A class of statistics with asymptotically normal distribution,” The Annals of Mathematical Statistics, 19, 293–325.\\
Hoeffding, W. (1961), “The strong law of large numbers for U-statistics,” Institute of Statistics mimeo series, 302.\\
Hollander, M. and Wolfe, D. A. (1999), Nonparametric statistical methods, New York John Wiley and Sons.\\
Kanninen, B. J. (1993), “Design of sequential experiments for contingent valuation studies,” Journal of Environmental Economics and Management, 25, S1–S11.\\
Lee, A. J. (1990), U-statistics: Theory and Practice, CRC Press.\\
Lo`eve, M. (1963), Probability theory, Van Nostrand, Princeton, NJ.\\
Loomes, G. and Sugden, R. (1982), “Regret theory: An alternative theory of rational choice under uncertainty,” The Economic Journal, 92, 805–824.\\
Mukhopadhyay, N. and De Silva, B. M. (2009), Sequential methods and their applications, CRC Press.\\
Robbins, H. (1959), Sequential estimation of the mean of a normal population. In Probability and Statistics (Harold Cramer Volume), Almquist and Wiksell: Uppsala, pp 235–245.\\
Sen, P. K. (1981), Sequential nonparametrics: Invariance principles and statistical inference, Wiley New York.\\
Sen, P. K. and Ghosh, M. (1981), “Sequential point estimation of estimable parameters based on U-statistics,” Sankhy¯a: The Indian Journal of Statistics, Series A, 331–344.\\
Sproule, R. (1969), “A sequential fixed-width confidence interval for the mean of a U-statistic,” Ph.D. thesis, Ph. D. dissertation, Univ. of North Carolina.\\
Xu, K. (2007), “U-statistics and their asymptotic results for some inequality and poverty measures,” Econometric Reviews, 26, 567–577.
\bibliography{ReferenceList-new}
\bibliographystyle{asa}
\end{document}